\newcommand{\eps}{\varepsilon}
\newcommand{\N}{{\mathbb N}}
\newcommand{\IE}{{\mathbb E}}
\newcommand{\IP}{{\mathbb P}}
\newcommand{\II}{\mathcal{I}}
\newcommand{\hp}{{\hat p}}
\newcommand{\tp}{{\tilde p}}
\newcommand{\hxi}{{\hat \xi}}
\newcommand{\htau}{{\hat \tau}}
\journal{Journal of Parallel and Distributed Computing}
\newcommand{\1}[1]{{\mathbf 1}{\{#1\}}}
\newtheorem{theo}{Theorem}[section]
\newtheorem{lem}[theo]{Lemma}
\newtheorem{prop}[theo]{Proposition}
\newtheorem{cor}[theo]{Corollary}
\begin{document}

\begin{frontmatter}
\title{\textbf{FPC-BI: Fast Probabilistic Consensus 
within Byzantine Infrastructures}}

\author{Serguei Popov [1,2]
and William J Buchanan [2,3]}

\address{[1] Centro de Matem\'atica, University of Porto, Porto, Portugal}
\address{[2] IOTA Foundation, c/o Nextland, Strassburgerstraße 55, 10405 Berlin, Germany}
\address{[3] Blockpass ID Lab, Edinburgh Napier University, Edinburgh, UK}

\begin{abstract}
This paper presents a novel leaderless protocol (FPC-BI: Fast Probabilistic Consensus within Byzantine Infrastructures) with a low communicational complexity and which allows a set of nodes to come to a consensus on a value of a single bit. The paper makes the assumption that part of the nodes are \emph{Byzantine}, and are thus controlled by an adversary who intends to either delay the consensus, or break it (this defines that at least a couple of honest nodes come to different conclusions). We prove that, nevertheless, the protocol works with high probability when its parameters are suitably chosen. Along this the paper also provides  explicit estimates on the probability that the protocol finalizes in the consensus state in a given time. This protocol could be applied to reaching consensus in decentralized cryptocurrency systems. A special feature of it is that it makes use of a sequence of random numbers which are either provided by a trusted source or generated by the nodes themselves using some decentralized random number generating protocol. This increases the overall trustworthiness of the infrastructure. A core contribution of the paper is that it uses a  \emph{very weak} consensus to obtain a \emph{strong} consensus on the value of a bit, and which can relate to the validity of a transaction. 
\end{abstract}

\begin{keyword}
voting, consensus, decentralized randomness, decentralized cryptocurrency systems
\end{keyword}

\end{frontmatter}

\section{Introduction}
\noindent Increasingly, distributed systems need to provide a consensus on the current state of the infrastructure within given time limits, and to a high degree of accuracy. At the core of cryptocurrency transactions, for example, is that miners must achieve a consensus on the current state of transactions. This works well when all the nodes are behaving correctly, but a malicious agent could infect the infrastructure, and try and change the consensus \cite{zheng2017overview}. 

Suppose that there is a network composed of~$n$ nodes, and these nodes need to come to consensus on the value of a bit. Some of these nodes, however, may belong to an adversary, an entity which aims to delay  the consensus or prevent it from happening altogether. 
This paper focuses on this situation - and which is typical in the cryptocurrency applications - when the number~$n$ of nodes is large, and where they are possibly (geographically) spread out. This makes the communicational costs important whereas  computational complexity and the memory usage are often of a lesser concern. 

\subsection{Key contributions}
\noindent The key contribution of this paper is a protocol which allow a 
large number of adversarial nodes which may be a (fixed) proportion of the total number of nodes, while keeping the communicational complexity low (see Corollary~\ref{c_log_n}). It then guarantees fast convergence for all initial conditions. 
It is important to note that
here we do not require that with high probability the consensus should be achieved on the initial majority value. Rather, what we need, is:

\begin{itemize}
 \item[(i)] if, initially, no \emph{significant majority}\footnote{loosely speaking, a significant majority is something statistically different from the $50/50$ situation; for example, the proportion of $1$-opinion is greater than~$\alpha$ for some fixed $\alpha>1/2$} 
 of nodes prefer~$1$,
 then the final consensus should be~$0$ 
whp\footnote{``whp''
 $=$ ``with high probability''};
 \item[(ii)] if, initially, a 
 \emph{supermajority}\footnote{again, this is 
 a loosely defined notion; 
 a supermajority is something already close
 to consensus, e.g.\ more than 90\% of all nodes 
 have the same opinion}
 of nodes prefer~$1$,
 then the final consensus should be~$1$ whp.
 \label{items(i)-(ii)}
\end{itemize}

Along with these assumptions, another important assumption that we make is that, among the totality  of~$n$ nodes, there are $qn$ adversarial (Byzantine) nodes\footnote{where $q\in [0,1)$,
although
we will typically assume below that $q$ is less than certain threshold, which 
is at most $1/2$}, 
who may not follow the proposed protocol and who may
act maliciously in order to prevent the consensus (of the honest nodes) from being achieved.

\subsection{Context}
\noindent To understand the importance of this work to cryptocurrency applications, consider a situation when there are two contradicting transactions. For example, if one transfers all the balance of address~$A_1$ to address~$A_2$, while the other transfers all the balance of address~$A_1$ to address~$A_3\neq A_2$. If those two transactions appear roughly at the same time, then neither of the two transactions will be strongly preferred by the nodes of the network, 
 they can then be declared invalid - just in case. 
On the other hand, it would not be a good idea to \emph{always} declare them invalid, as a malicious actor (Eve) could be able to exploit this. For example, Eve could place a legitimate transaction, such as buying some goods from a merchant. When she receives the goods, she publishes a double-spending transaction - as above - in the hope that \emph{both} will be canceled, and so she would effectively receive her money back (or at least take the money away from the merchant). To avoid this kind of threat, it would be desirable if the first transaction (payment to the merchant) which, by that time, would have probably gained some confidence from the nodes, would stay confirmed, and only the subsequent double-spend gets canceled.

\section{Related Work}
\noindent There is a wide range of classical work on (probabilistic) Byzantine consensus protocols \cite{aguilera2012correctness,ben1983another,bracha1987asynchronous,feldman1989optimal,friedman2005simple,rabin1983randomized}. The disadvantage of the approach of these papers is, however, that they typically require that the nodes exchange $O(n^2)$ messages in each round (which means $O(n)$ messages for each node). In the situation where the communicational complexity matters, this can be a major barrier. 

A good deal of work focuses on failures within a network infrastructure, rather than on malicious agents. The work of Liu \cite{liu2018scalable} defines  FastBFT, and which is a fast and scalable BFT (Byzantine fault tolerance) protocol. Within this, the work integrates trusted execution environments (TEEs) with lightweight secret sharing, and results in a low latency infrastructure. Crain et al \cite{crain2018dbft} define Democratic Byzantine Fault Tolerance (DBFT) and which is a leaderless Byzantine consensus. This provides a robust infrastructure where there is a failure in the leader of the consensus network. The core contribution is that nodes will process message whenever they receive them, instead of waiting for a co-ordinate to confirm messages. Another Byzantine Fault Tolerant method which does not require a leader node is Honey Badger \cite{miller2016honey}. This method is asynchronous in its scope and can cope with corrupted nodes. Unfortunately, it does not actually make any commitments around the timing of the delivery of a message, and where even if Eve controls the scheduling of messages, there will be no impact on the overall consensus. 

There has also been much research on the probabilistic models where, in each round, a node only contacts a small number of other nodes in order to learn their opinions, and possibly change its own. This type of models is usually called \emph{voter models}, and which were introduced in the 70s by Holley and Liggett~\cite{holley1975ergodic} and Clifford and Sudbury~\cite{CliffSud}.
A very important observation is that, in most cases, voter models have only two extremal 
invariant measures: one concentrated 
on the ``all-$0$'' configuration, and the other one concentrated
on the ``all-$1$'' -- we can naturally call these two configurations ``consensus states''. 
Since then, there has been a range of work on voter models;
in particular, let us cite 
\cite{becchetti2016stabilizing, cooper2014power,cooper2015fast,elsasser2016rapid,fanti2019communication,CruiseGanesh14} which are specifically
aimed at reaching consensus and have low communicational
complexity (typically, $O(n\ln n)$). 
However, in these works, the presence of adversarial nodes is usually either not allowed,
or is supposed to be very minimal 
(the system can admit only roughly $O(\sqrt{n})$ adversarial nodes, so the 
allowed proportion of the adversarial nodes is asymptotically zero).

\section{Model Definition}
\label{s_descr}


The developed model assumes that adversarial nodes can exchange information freely between themselves and can agree on a common strategy. In fact, they all may be controlled by a single individual or entity. We also assume that the adversary is \emph{omniscient}: at each moment of time, he is aware of the current opinion of every honest node. While this assumption may seem a bit too extreme, note that the adversarial  nodes can query the honest ones a bit more frequently to be aware of the current state of the network;
also, even if the ``too frequent'' queries are somehow not permitted, the adversary can still \emph{infer} (with some degree of confidence) about the opinion of a given honest node by analyzing the history of this node's interactions with all the adversarial nodes.

The remaining $(1-q)n$ nodes are \emph{honest}, i.e.,
they follow the recommended protocol. We assume that 
they are numbered from~$1$ to~$(1-q)n$;
this will enter into several notations below.

Our protocol will be divided into epochs which we
call rounds; for now, let us assume that the end of the previous round (which
coincides with the beginning of the next round) occur at predetermined 
time instances.\footnote{in fact, those instances are the times when the common
random numbers are made available to the system, see Section~\ref{s_generating_RNs} below}
The basic feature of it
is that, in each round, each node may query~$k$ other nodes
about their current opinion (i.e., the preferred value of the bit).  We allow~$k$ to be relatively large (say, $k=50$ or so), but still assume that $k\ll n$.
We also assume that the complete list of the nodes is known to all the participants, and any node can  \emph{directly} query any other node. For the sake of clarity of the presentation,
for now we assume that all nodes (honest and adversarial) always respond to the queries; in Section~\ref{s_gener} we deal with the general situation when nodes can 
possibly remain silent. This, by the way, will result in
a new ``security threshold''~$\phi^{-2}\approx 0.38$ 
(where~$\phi$
is the Golden Ratio), different from the ``usual''
security thresholds~$\frac{1}{2}$ and~$\frac{1}{3}$.

With respect to the \emph{behavior} of the adversarial nodes, there are two important cases to be distinguished:
\begin{itemize}
 \item \emph{Cautious adversary}\footnote{also know as
\emph{covert adversary}, cf.~\cite{aumann2007security}}: 
any adversarial node
must maintain the same opinion in the same round,
i.e., respond the same value to all the queries
it receives in that round.
 \item \emph{Berserk adversary}:
 an adversarial node may respond differently to things for different queries in the same round.
\end{itemize}

To explain the reason why the adversary may choose to be cautious,
first note that we also assume that
nodes have identities and sign all their messages;
this way, one can always \emph{prove} that a given message originates from a given node.
Now, if a node is not cautious, 
this may be detected by the honest nodes (e.g., two  honest nodes may exchange their query history and verify that the same node passed contradicting information to them).
In such a case, the offender may be penalized
by \emph{all} the honest nodes (the nodes who discovered
the fraud would pass that information along, together
with the relevant proof%
). Since, in the sequel, we will see that the protocol provides more security and converges faster  against a cautious adversary, it may be indeed a good idea for the honest nodes to adopt additional measures in order to detect the ``berserk'' behavior.
Also, since~$k$ would be typically large and each node
is queried~$k$ times on average during each round,
we make a further simplifying assumption that
a cautious adversary just chooses (in some way) the opinions
of all his nodes \emph{before} the current round starts and then communicates these opinions to whoever asks.

\subsection{Random numbers}
\label{s_generating_RNs}
\noindent The protocol we are going to describe requires
the system to 
use, from time to time
(more precisely, once in each round),
a random number available to all the participants
(this is very similar to the  ``global-coin'' approach
used in many works on Byzantine
consensus, see e.g.~\cite{aguilera2012correctness}).
For the sake of cleanness of the presentation
and the arguments, in this paper we mainly assume that these
random numbers
are provided by a trusted source, not controlled
by the adversary\footnote{i.e., the adversary 
may be omniscient (knows all information
that exists \emph{now}), but he is not \emph{prescient}
(cannot know the future)}. 
We observe that such random number generation
can be done in a decentralized way 
as well (provided that the proportion~$q$
of the adversarial nodes is not too large),
see e.g.~\cite{cascudo2017scrape, Lenstra_Wes17, popov2017decentralized, schindlerhydrand, syta2017scalable}.
If a ``completely decentralized'' solution
proves to be too expensive (from the point of view
of computational and/or communicational complexity),
one can consider ``intermediate'' ones, such as 
using a smaller committee for this, and/or making
use of many publicly available RNGs.
It is important to observe that (as we will see from the 
analysis below), 
even if from time to time the adversary can get
(total or partial) control 
of the random number, this can only lead to delayed consensus,
but he cannot convince different honest nodes of different 
things, i.e., safety is not violated.
Also, it is not necessary that really \emph{all}
honest nodes agree on the same number; if most 
of them do, this is already fine.
This justifies the idea that, in our context, both decentralization
and ``strong consensus''
are not of utter importance for the specific task
of random number generation.
We postpone the rest of this discussion
to Section~\ref{s_gener}, since the methods
 we employ for proving our results are relevant for it.

Before actually describing our protocol,
it is important to note that we assume that there is 
no central entity that ``supervises'' the network
and can somehow know that the consensus was achieved
and therefore it is time to stop. This means that 
each node must decide when to stop using a \emph{local}
rule, i.e., using only the information locally available
to it. A quick example of such a decision rule might be the following:
``if during the last~$10$ rounds at least $2/3$ of my queries were answered 
with opinion~`$1$', then I consider this opinion to be final''.

\subsection{Parameter setup}
\label{s_parsetup}
\noindent The protocol depends on a set of integer and real parameters:
\begin{itemize}
 \item $1/2 < a \leq b <1$, the threshold limits
 in the first round;\footnote{they are needed to assure (i)--(ii)
on page~\pageref{items(i)-(ii)}; $a$ has to be larger than the ``significant
majority threshold'', while $b$ needs to be smaller than the ``supermajority
threshold''}
 \item $\beta\in (0,1/2)$, the threshold limit
 parameter in the subsequent rounds;
 \item $m_0\in\N$, the cooling-off period;
 \item $\ell\in\N$, the number of consecutive rounds
 (when the cooling-off period is over)
 with the same opinion after which it becomes final, 
 for one node. 
\end{itemize}
Now, let us describe our protocol.
First, we assume that each node 
decides on the initial value of the bit,
according to any reasonable rule\footnote{for example,
if a node sees a valid transaction~$x$ (which 
does not contradict to prior transactions) at time~$t$,
and during the time interval $[t,t+\Delta]$
it does not see any transactions that contradict to~$x$
it may initially decide that~$x$ is \emph{good},
setting the value of the corresponding bit to~$1$}.
Then, we describe the \emph{first round} of the  protocol in the following way:
\begin{itemize}
 \item in the first round,
 each honest node~$j$ 
 randomly queries other nodes~$k$ times
(repetitions and self-queries are allowed\footnote{we 
have chosen this mainly to facilitate the 
subsequent analysis; of course, e.g.\ querying~$k$
\emph{other} nodes chosen uniformly at random can be analyzed in a similar way, with some more technical complications because the relevant random variables will not be exactly Binomial, but only approximately so
(it will be Hypergeometric, in fact). In any case,
we will see below that~$k$ typically will be much less 
than~$\sqrt{n}$, which makes querying the same node 
in the same round quite unlikely.})
and records the number~$\eta_1(j)$ of $1$-opinions
it receives;
\item \emph{after that}, 
 the value of the random variable
$X_1\sim U[a,b]$ is made available to the nodes\footnote{$U[a,b]$
stands for the uniform probability distribution on interval $[a,b]$};
\item then, each honest node
 uses the following decision rule:
if $k^{-1}\eta_1(j)\geq X_1$, it adopts opinion~$1$,
otherwise 
it adopts opinion~$0$.
\end{itemize}

In the subsequent rounds, the dynamics is almost the 
same, we only change the interval where the uniform 
random variable lives:
\begin{itemize}
 \item in the round~$m\geq 2$,
 each honest node~$j$ 
randomly queries other nodes~$k$ times,
and records the number~$\eta_m(j)$ of $1$-opinions
it receives;
\item \emph{after that}, 
 the value of the random variable
$X_m\sim U[\beta,1-\beta]$ is made available to the nodes;
\item then, each honest node which does not
yet have \emph{final} opinion
 uses the following decision rule:
if $k^{-1}\eta_m(j)\geq X_m$, it adopts opinion~$1$,
otherwise it adopts opinion~$0$.
\end{itemize}
As mentioned above, if an honest node has the same 
opinion during~$\ell$ consecutive rounds
after the cooling-off period
(i.e., counting from time $m_0+1$ on)
this opinion becomes final. 

\subsection{Consensus mechanism}
\noindent Let us now explain informally what makes our protocol converge fast to the consensus even in the Byzantine setting. The general idea is the following: if the adversary (Eve) knows the decision rules that the honest nodes use, she can then predict their behaviour and adjust her strategy accordingly, in order to be able to 
delay the consensus and further mess with the system. Therefore, let us make these rules  unknown to all the participants, including Eve. Specifically, even though Eve's nodes can control (to some extent) the expected proportion of $1$-responses among the~$k$ queries,
she cannot control the value that the ``threshold'' random
variable assumes. As a consequence, the decision
threshold~$X_1$ will likely be ``separated'' from
that typical proportion.

When this separation happens, the opinions of the honest nodes would tend very strongly in one of the directions whp. Then, it will be extremely unlikely that 
the system leaves this ``pre-consensus'' state,
due to the fact that the decision thresholds, however
random, are always uniformly away from~$0$ and~$1$.
Also, we mention that a similar protocol with intended cryptocurrency applications was 
considered in~\cite{rocket2019ava}. However, there 
only ``fixed thresholds'' were used, 
which gives Eve much more control, so that, in particular, then she could delay the consensus a great deal.
As a last remark, it is important to note that having
``independently random thresholds'' (i.e., each node 
independently chooses its own decision threshold) is not
enough to achieve the effect described above
--- these ``locally random'' decisions will simply average out;
 that is, having common random numbers is indeed essential.


\section{Results}
\label{s_results}
\noindent We define two events relative to the final consensus
value:
\begin{equation}
 H_i = \{ \text{all honest nodes eventually reach
   final opinion } i\},
 \quad i=0,1.
\end{equation}
Thus, the union $H_0\cup H_1$ stands for the event
that all honest nodes agree on the same value,
i.e., that the consensus was achieved.

For $0<q< \beta<\frac{1}{2}$, abbreviate 
\[
\varphi_{\beta,q,k}=\frac{\beta-q}{2(1-q)}
-e^{-\frac{1}{2}k(\beta-q)^2}.
\]
In the following, 
we assume that~$k$ is large enough so that 
$\varphi_{\beta,q,k}>0$
(indeed, the first term in the above display
is strictly positive, and the second
one converges to~$0$ as $k\to\infty$).
Let us also denote, for $n,k,m_0,\ell$ as above
and a generic non negative 
integer~$u$
\begin{align}
 W(n,k,m_0,\ell,u) & = (1-q)n\Big(\big(1
  -\big(1-e^{-\frac{1}{2}k(\beta-q)^2}\big)^\ell\big)^u
    + \Big(\frac{e^{-\frac{1}{2}k(\beta-q)^2}}
    {1-e^{-\frac{1}{2}k(\beta-q)^2}}\Big)^{\ell-1}\Big)
 \nonumber\\
&\qquad 
+(m_0+\ell u)
 e^{-2(1-q)n\varphi_{\beta,q,k}^2},
\label{df_bigW}
\end{align}
and
\begin{align}
 \psi_{cau}(n,k) &= 2\exp\big(-\tfrac{1}{8}n
     \tfrac{(\beta-q)^2}{4(1-q)}\big)
      + (1-2\beta)^{-1} \sqrt{2k^{-1}\ln\tfrac{4(1-q)}{\beta-q}},
\label{df_psi_cautious}
 \\
 \psi_{ber}(n,k) &= 2\exp\big(-\tfrac{1}{8}n
     \tfrac{(\beta-q)^2}{4(1-q)}\big)
    + \frac{q+\sqrt{2k^{-1}
\ln\tfrac{4(1-q)}{\beta-q}}}{1-2\beta}.
\label{df_psi_berserk}
\end{align}
(in the above notation, we omit the dependence on~$q$ and $\beta$).
As it will become clear shortly, we will need $W(n,k,m_0,\ell,u)$
to be small, and $\psi$'s 
(which, as the reader probably have noted, relate to cautious
and berserk adversaries) to be strictly less than~$1$.
It is not difficult to see (we elaborate more on that below)
that (recall that $q< \beta$) the value
of the expression in~\eqref{df_bigW} will
be small indeed if~$n$ is large and~$k$ is at least $C \ln n$
for a large~$C$ (indeed, with other parameters fixed,
note that each of the two terms in the second parentheses will be polynomially
small in~$k$, with the factor~$C$ entering to the negative power; 
a fixed $C>(\beta-q)^{-2}$ works for all large enough~$n$).
Then, the first term in the expression in~\eqref{df_psi_cautious}
will be very small for large~$n$, while the second term 
will also be small for a sufficiently large~$k$. 
As for~\eqref{df_psi_berserk}, it shares the same
first term with~\eqref{df_psi_cautious}; 
the second term, however, will be of constant order,
and if we want it to be strictly less than~$1$ for a 
large~$k$, we need the constraint $q<1-2\beta$
to hold.

Now, we begin formulating our main results.
Let $\mathcal{N}$ be the number of rounds until 
\emph{all} honest nodes achieve their final opinions.
The next result controls both the number of necessary
rounds and the probability that the final consensus
is achieved (i.e., the event $H_0\cup H_1$ occurs):
\begin{theo}
\label{t_safety_liveness}
\begin{itemize}
\item[(i)] 
For any strategy of a cautious adversary,
it holds that
\begin{equation}
\label{eq_safety_liveness_cautious} 
\IP\big[(H_0\cup H_1)\cap\{\mathcal{N}\leq m_0+\ell u\}\big]
 \geq 1-W(n,k,m_0,\ell,u)- \big(\psi_{cau}(n,k)\big)^{m_0} .
\end{equation}
\item[(ii)]
For any strategy of a berserk adversary, we have 
\begin{equation}
\label{eq_safety_liveness_berserk} 
\IP\big[(H_0\cup H_1)\cap\{\mathcal{N}\leq m_0+\ell u\}\big]
 \geq 1-W(n,k,m_0,\ell,u)- \big(\psi_{ber}(n,k) \big)^{m_0}.
\end{equation}
\end{itemize}
\end{theo}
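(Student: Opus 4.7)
The plan is to decompose the failure event into three pieces matching the three summands on the right-hand side of \eqref{eq_safety_liveness_cautious}--\eqref{eq_safety_liveness_berserk}: the cooling-off phase never drives the system into a ``strong-majority'' state (controlled by $\psi^{m_0}$), or the strong majority once established is broken in some later round by an unlikely Hoeffding deviation (controlled by $(m_0+\ell u)e^{-2(1-q)n\varphi_{\beta,q,k}^2}$), or the strong majority persists but some individual honest node fails to lock in (or locks in at the minority value) within the $u$ post-cooling blocks of length~$\ell$.

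For the first piece, I would fix a cooling-off round $m\leq m_0$ and condition on the full history prior to~$m$. Let $\mu_m^j$ denote the expected fraction of $1$-responses to honest node~$j$'s queries: a cautious adversary forces $\mu_m^j\equiv\mu_m$ common to all~$j$, whereas a berserk adversary, by asymmetrically distributing her $qn$ responses across honest queriers, can spread the $\mu_m^j$ across an interval of length at most~$q$. Hoeffding applied to the $(1-q)n$ independent query batches of size~$k$ shows $k^{-1}\eta_m(j)$ is within $\sqrt{(2k)^{-1}\ln\tfrac{4(1-q)}{\beta-q}}$ of $\mu_m^j$ for every~$j$ except on an event of probability at most $2\exp\bigl(-\tfrac18 n(\beta-q)^2/(4(1-q))\bigr)$, which is precisely the first summand of~$\psi_{cau}$ and $\psi_{ber}$. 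Off this event, all honest nodes agree on their round-$m$ vote as long as $X_m\sim U[\beta,1-\beta]$ avoids their combined ambiguity band, whose total length is at most $2\sqrt{\cdots}$ (cautious) or $q+2\sqrt{\cdots}$ (berserk). Dividing by $1-2\beta$ yields the second summands of $\psi_{cau},\psi_{ber}$, and the conditional independence of $X_1,\dots,X_{m_0}$ from the preceding randomness implies that no cooling-off round produces a strong majority with probability at most $\psi^{m_0}$.

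For the second piece, once a strong majority on, say, value~$1$ is in force, the per-query probability of a $1$-response exceeds the worst-case threshold $1-\beta$ by at least $\beta-q$ up to small corrections, so each honest node re-votes $1$ except with probability at most $p:=e^{-\tfrac12 k(\beta-q)^2}$; a Hoeffding bound on the resulting honest fraction of $1$-opinions shows it stays above the required level except on an event of probability $e^{-2(1-q)n\varphi_{\beta,q,k}^2}$ per round, and a union bound over the $m_0+\ell u$ rounds yields the third summand of $W$. For the third piece, conditional on sustained strong majority the node-level flips are (near-)independent Bernoulli$(p)$ events across nodes and rounds; partitioning the $\ell u$ post-cooling rounds into $u$ disjoint blocks of length~$\ell$, a fixed node fails to finalize only if every block contains a flip (probability $(1-(1-p)^\ell)^u$ per node), while it finalizes against the majority only if some block consists entirely of minority opinions (a run probability bounded by $(p/(1-p))^{\ell-1}$ per node via a standard renewal-type argument), and summing over the $(1-q)n$ honest nodes supplies the first two summands of $W$. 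The main technical obstacle I anticipate is the berserk case of the first piece: Eve selects her per-query allocation knowing the current honest state but \emph{before} $X_m$ is revealed, and one must verify that no correlated choice across honest nodes can shrink the $q$-inflated ambiguity band below the uniform worst case used above---this is what ultimately forces the extra~$q$ term in $\psi_{ber}$ and the $\phi^{-2}$-type threshold discussed later in the paper.
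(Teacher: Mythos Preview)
Your three-piece decomposition is exactly the one the paper uses (there called $D_1,D_2,D_3$: $\Psi\le m_0$, then the strong majority is preserved up to round $m_0+\ell u$, then every honest node locks in correctly), and your treatment of the second and third pieces matches Lemmas~\ref{l_pm_pm+1} and~\ref{l_elem_calc} essentially line for line.

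The first piece, however, contains a genuine misstep. You write that Hoeffding on the $(1-q)n$ query batches shows $k^{-1}\eta_m(j)$ is within $\sqrt{(2k)^{-1}\ln\tfrac{4(1-q)}{\beta-q}}$ of $\mu_m^j$ \emph{for every~$j$} except on an event of probability at most $2\exp\bigl(-\tfrac18 n(\beta-q)^2/(4(1-q))\bigr)$. That bound cannot be obtained this way: with your choice of deviation the per-node failure probability is $\tfrac{\beta-q}{2(1-q)}$, and a union bound over $(1-q)n$ nodes yields something of order~$n$, not exponentially small in~$n$. Correspondingly, the target of this step is not ``all honest nodes agree'' but only that $\hp_m$ lands outside $\bigl(\tfrac{\beta-q}{2(1-q)},\,1-\tfrac{\beta-q}{2(1-q)}\bigr)$, i.e.\ a strong majority, not unanimity.

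The paper's Lemma~\ref{l_dominate_Psi} fixes this by reversing the order of the two ingredients. One first conditions on~$X_m$ and splits according to whether $e^{-2k(X_m-h)^2}<\mu:=\tfrac{\beta-q}{4(1-q)}$ (for the cautious case; the berserk case replaces $h$ by the whole interval $[\hp(1-q),\hp(1-q)+q]$). The probability that this fails is exactly the second summand of~$\psi$. When it holds, each honest node individually votes against the threshold with probability at most~$\mu$, and \emph{then} Hoeffding is applied to the \emph{count} of such nodes among $(1-q)n$ --- this is where the exponential-in-$n$ first summand of~$\psi$ comes from, and it delivers $\hp_m\le 2\mu$ (or $\ge 1-2\mu$) rather than unanimity. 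Once you make this correction, the rest of your outline goes through as in the paper.
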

Note that the first term in~\eqref{df_bigW} decreases in~$u$ while the second
one increases in~$u$; however, the second term is typically of much smaller 
value (since it has a exponentially small in~$n$ factor) and therefore
one may obtain better estimates in~\eqref{eq_safety_liveness_cautious}
and~\eqref{eq_safety_liveness_berserk} with some strictly positive values of~$u$.
Note also that the only difference 
between~\eqref{eq_safety_liveness_cautious}
and~\eqref{eq_safety_liveness_berserk}
is in the second terms of~\eqref{df_psi_cautious}
and~\eqref{df_psi_berserk}.
As we will see in the proofs, these terms enter
into the part which is ``responsible'' for the estimates on
the time moment when the adversary loses control on the situation
which permits one of the opinions to reach a supermajority;
from that moment on, there is essentially no difference
if the adversary is cautious or berserk.

\begin{cor}
\label{c_1/2_1/3}
 For a cautious adversary we need that $q<\beta$,
 while for a berserk adversary we \emph{also}
need that $q<1-2\beta$.
 Recalling also that~$\beta$ must belong to~$(0,1/2)$,
it is not difficult to see that
\begin{itemize}
 \item for a cautious adversary, for any~$q<1/2$
and all large enough~$n$ 
 we are able to adjust the parameters $k,\beta,m_0,\ell$
 in such a way
that the protocol works whp (in particular, 
a $\beta$-value sufficiently close to~$1/2$ would work);
 \item however, for a berserk adversary, we are able
 to do the same only for $q<1/3$ (here, $\beta=1/3$
 would work).
\end{itemize}
\end{cor}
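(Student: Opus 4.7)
The plan is to deduce both statements from Theorem~\ref{t_safety_liveness} by a direct analysis of when the right-hand sides of \eqref{eq_safety_liveness_cautious}--\eqref{eq_safety_liveness_berserk} are close to~$1$. The hypothesis $q<\beta$ is really already built into the very definition of $\varphi_{\beta,q,k}$: one needs $\beta-q>0$ to even make $\varphi_{\beta,q,k}$ positive for large~$k$, and this positivity is what makes $W(n,k,m_0,\ell,u)$ small (via the exponentially small in~$n$ second summand, together with a suitable choice of $u,\ell$ controlling the first summand). So the whole game is to show that the two parameter regimes in the statement are exactly the ones that force the relevant~$\psi$ to be strictly less than~$1$; once that is done, $(\psi)^{m_0}\to 0$ by choosing $m_0$ large, and $W$ is made as small as we like by choosing~$n$ and $k$ large.

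For the cautious case, I will first observe that in $\psi_{cau}(n,k)$ the exponential term tends to~$0$ as $n\to\infty$, while the second term $(1-2\beta)^{-1}\sqrt{2k^{-1}\ln\frac{4(1-q)}{\beta-q}}$ tends to~$0$ as $k\to\infty$, provided only that $\beta<1/2$ and $q<\beta$. Therefore, given any $q<1/2$, I would pick $\beta$ strictly between $q$ and $1/2$ (possible since $q<1/2$), then take $k$ large enough so that the second term is $<1/2$ and $\varphi_{\beta,q,k}>0$, then $n$ large so the first term is also small, and finally $m_0$ large to drive $(\psi_{cau})^{m_0}$ to~$0$; finishing $W\to 0$ by an appropriate choice of~$\ell$ and~$u$.

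For the berserk case the second term in $\psi_{ber}$ now has a nonvanishing contribution $\frac{q}{1-2\beta}$ even as $k\to\infty$. So the necessary and sufficient requirement for being able to drive $\psi_{ber}$ below~$1$ is $\frac{q}{1-2\beta}<1$, i.e., $q<1-2\beta$, \emph{in addition} to $q<\beta$. The key elementary step is then to maximize $\min(\beta,1-2\beta)$ over $\beta\in(0,1/2)$: the two expressions are equal when $\beta=1-2\beta$, i.e., $\beta=1/3$, and the common value is~$1/3$. Hence the protocol can be tuned whp precisely when $q<1/3$, with $\beta=1/3$ being the optimal choice (in the sense of largest admissible~$q$).

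The only ``obstacle'' I anticipate is purely bookkeeping: checking that one really can choose $k,\beta,m_0,\ell,u$ in a mutually consistent way so that all three summands (the two in $W$ and the $(\psi)^{m_0}$ term) are simultaneously small. This is routine because $k$ enters $\varphi_{\beta,q,k}$ and the two $\psi$'s only through quantities that behave monotonically once $q<\beta$ is fixed, and $m_0,\ell,u$ can be chosen after $k,\beta,n$ are fixed; there is no circular dependence, so the four parameters can be tuned in order.
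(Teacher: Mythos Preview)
Your proposal is correct and follows exactly the reasoning the paper intends: the corollary is stated without a separate proof, the argument being implicit in the discussion preceding Theorem~\ref{t_safety_liveness} (that $W$ is small once $q<\beta$ and $k,n$ are large, that both terms of $\psi_{cau}$ vanish as $n,k\to\infty$, and that $\psi_{ber}<1$ for large $k$ forces $q<1-2\beta$). Your optimization $\max_{\beta\in(0,1/2)}\min(\beta,1-2\beta)=1/3$ at $\beta=1/3$ is precisely the elementary step the paper leaves to the reader, and your remark that the parameters can be tuned sequentially without circular dependence is a welcome clarification.
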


\begin{cor}
\label{c_log_n}
 One may be interested in asymptotic results,
for example, of the following kind: assume that
the number of nodes~$n$ is fixed (and large), 
and the proportion of Byzantine nodes~$q$
is \emph{acceptable} (i.e., less than~$1/2$ for 
the case of cautious adversary, or less than~$1/3$
for the case of berserk adversary, as discussed above).
We then want to choose the parameters
of the protocol in such a way that the probabilities
in~\eqref{eq_safety_liveness_cautious} 
and~\eqref{eq_safety_liveness_berserk}
are at least $1-\eps(n)$, where~$\eps(n)$
is polynomially small in~$n$ (i.e., $\eps(n) = O(n^{-h})$ for 
some~$h>0$). 

First, $\beta=1/3$ works in both cases; then, 
a quick analysis 
of~\eqref{eq_safety_liveness_cautious}--\eqref{eq_safety_liveness_berserk}
shows that one possibility is: chose $k= C \ln n$
(with a sufficiently large constant in front), 
$\ell$ (the number of consecutive rounds with the same opinion
before finalization) of constant order, and
 $m_0 = O\big(\frac{\ln n}{\ln \ln n}\big)$
for cautious adversary or $m_0=O(\ln n)$
for a berserk one.

That is, the overall communicational complexity
will be at most $O\big(\frac{n \ln^2 n}{\ln \ln n}\big)$
for a cautious adversary
and $O(n \ln^2 n)$ for a berserk one.
\end{cor}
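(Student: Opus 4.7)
The plan is to plug the proposed parameter choices directly into the bounds of Theorem~\ref{t_safety_liveness} and verify that each error term (that is, $W(n,k,m_0,\ell,u)$ together with $\psi_{cau}(n,k)^{m_0}$ or $\psi_{ber}(n,k)^{m_0}$) is $O(n^{-h})$ for the prescribed $h>0$; the stated communicational complexity then follows by multiplying out $n$, $k$, and the total number of rounds. First, one checks feasibility: $\beta=1/3$ belongs to $(0,1/2)$, and the single constraint $q<1/3$ simultaneously ensures both $q<\beta$ (needed for a cautious adversary) and $q<1-2\beta$ (needed for a berserk one). Thus $\beta-q$ is a positive constant throughout, and so is $\varphi_{\beta,q,k}$ once $k$ is moderately large.

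For $W(n,k,m_0,\ell,u)$, taking $k=C\ln n$ gives $e^{-k(\beta-q)^2/2}=n^{-\gamma}$ with $\gamma=C(\beta-q)^2/2$, and $\gamma$ can be made arbitrarily large by choosing $C$ large. Bernoulli's inequality yields $\bigl(1-(1-n^{-\gamma})^\ell\bigr)^u\le(\ell n^{-\gamma})^u$, and for $n$ large one also has $\bigl(n^{-\gamma}/(1-n^{-\gamma})\bigr)^{\ell-1}\le(2n^{-\gamma})^{\ell-1}$. Multiplying by $(1-q)n$ shows that both contributions are $O(n^{-h})$ as soon as $\ell$ and $u$ are constants satisfying $\min\{u\gamma,\gamma(\ell-1)\}\ge h+1$. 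The remaining summand $(m_0+\ell u)\exp(-2(1-q)n\varphi_{\beta,q,k}^2)$ is exponentially small in $n$, so it is completely negligible.

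The $\psi$-analysis is the main subtlety. In both~\eqref{df_psi_cautious} and~\eqref{df_psi_berserk} the first summand is $\exp(-\Theta(n))$ and is harmless. For the berserk case the remaining piece tends to $q/(1-2\beta)=3q<1$ as $k\to\infty$, so $\psi_{ber}(n,k)^{m_0}=(3q+o(1))^{m_0}$, which is $O(n^{-h})$ once $m_0\ge h(\ln n)/|\ln(3q)|$, giving $m_0=O(\ln n)$. For the cautious case, however, the remaining piece is only $\Theta(1/\sqrt{\ln n})$, i.e., $o(1)$ in $n$ rather than polynomially small; hence one gets $\psi_{cau}(n,k)^{m_0}\le\exp\bigl(-\tfrac{1}{2}m_0\ln\ln n+O(m_0)\bigr)$. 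Balancing this against $n^{-h}$ forces the slightly larger choice $m_0=\Theta(\ln n/\ln\ln n)$ with a sufficiently large implicit constant. Extracting a polynomial decay in $n$ from a $\psi_{cau}$ that is only logarithmically small is the principal technical point of the argument.

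Finally, since each honest node sends at most $k$ queries per round, the total communication is bounded by $(1-q)nk(m_0+\ell u)$; substituting $k=\Theta(\ln n)$, $\ell u=O(1)$, and the two choices of $m_0$ gives $O(n\ln^2 n/\ln\ln n)$ for a cautious adversary and $O(n\ln^2 n)$ for a berserk one, matching the claimed bounds.
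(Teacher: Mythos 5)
Your verification is correct and is precisely the ``quick analysis'' the paper leaves implicit: the essential point --- that $\psi_{cau}(n,k)$ is only $\Theta(1/\sqrt{\ln n})$, forcing $m_0=\Theta(\ln n/\ln\ln n)$ to extract polynomial decay, while $\psi_{ber}(n,k)$ stabilizes at a constant strictly below $1$, allowing $m_0=O(\ln n)$ --- is exactly what drives the two different complexity bounds, and your handling of $W$ via Bernoulli's inequality is the intended computation. One small caveat: for a cautious adversary the corollary permits any $q<1/2$, in which case $\beta=1/3$ need not satisfy $q<\beta$; one then simply takes $\beta\in(q,1/2)$, and your argument goes through verbatim since $\beta-q$ remains a positive constant.
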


Next, let~$\hp_0$ be the initial proportion of $1$-opinions
among the honest nodes.
Our second result shows that if, initially, no  
significant majority of nodes prefer~$1$,
then the final consensus will be~$0$ whp, and
if the supermajority of nodes prefer~$1$,
 then the final consensus will be~$1$ whp
(recall (i)--(ii) 
on page~\pageref{items(i)-(ii)}),
and it is valid in the general case (i.e.,
for both cautious and berserk adversaries).
\begin{theo}
\label{t_initial}
\begin{itemize}
 \item[(i)]
First, suppose that $\hp_0 (1-q) + q < a$,
and assume that~$k$ is sufficiently large so that 
\[
 e^{-2k(a-\hp_0 (1-q) - q)^2} \leq \frac{\beta-q}{4(1-q)}. 
\]
Then, we have 
\begin{align}
\IP\big[H_0\cap\{\mathcal{N}\leq m_0+\ell u\}\big]
 &  \geq 1 - \exp\big(-\tfrac{1}{8}n
     \tfrac{(\beta-q)^2}{4(1-q)}\big)
      -W(n,k,m_0,\ell,u).
\label{eq_initial0}     
\end{align}
 \item[(ii)]
 Now, suppose that $\hp_0(1-q)>b$, 
and assume that~$k$ is sufficiently large so that 
\[
 e^{-2k(\hp_0 (1-q) - b)^2} \leq \frac{\beta-q}{4(1-q)}. 
\]
 Then, 
the same estimate~\eqref{eq_initial0} holds for
$\IP[H_1\cap\{\mathcal{N}\leq m_0+\ell u\}]$.
\end{itemize}
\end{theo}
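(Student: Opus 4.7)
The plan is to reduce both parts to Theorem~\ref{t_safety_liveness} by showing that after just a single round the system is already in the ``post-separation'' regime that Theorem~\ref{t_safety_liveness}'s machinery needs. By symmetry (the roles of~$0$ and~$1$ are interchangeable, with~$b$ playing the role of~$a$ and $1-\hp_0$ the role of~$\hp_0$), I will treat only case~(i); case~(ii) follows \emph{mutatis mutandis}.

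First I would analyse round~$1$. Fix an honest node~$j$; its $k$ queries are independent uniform samples and the expected proportion of $1$-responses that~$j$ sees is at most $\hp_0 (1-q)+q$, since the adversary can do no better than make every Byzantine node report~$1$. Because by hypothesis $\Delta := a - \hp_0(1-q) - q > 0$, Hoeffding's inequality for the empirical mean of independent $[0,1]$-valued variables gives
\[
\IP\big[k^{-1}\eta_1(j) \geq a\big] \leq e^{-2k\Delta^2} \leq \frac{\beta-q}{4(1-q)},
\]
the second inequality being the assumption on~$k$. Because $X_1 \in [a,b]$ always, node~$j$ can adopt opinion~$1$ in round~$1$ only if $k^{-1}\eta_1(j) \geq a$. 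Hence the number~$Y$ of honest $1$-adopters after round~$1$ is dominated by a sum of independent $\{0,1\}$-valued indicators --- the randomness across honest nodes decouples because each~$j$'s sample is drawn independently and, once the (possibly randomised but non-adaptive within a round) adversarial strategy is fixed, the responses reaching different honest nodes are independent --- with mean $\IE Y \leq (1-q)n \cdot \frac{\beta-q}{4(1-q)} = \frac{n(\beta-q)}{4}$.

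A second application of Hoeffding, this time across the $(1-q)n$ honest nodes, then shows that $Y \leq \frac{n(\beta-q)}{2}$ with probability at least $1-\exp\!\big(-\tfrac{1}{8} n \tfrac{(\beta-q)^2}{4(1-q)}\big)$. On this event the proportion $\hp_1$ of $1$-opinions among the honest nodes after round~$1$ satisfies $\hp_1(1-q) + q < \beta$, which says that even with every Byzantine node backing~$1$, the expected $1$-fraction any querier sees from round~$2$ on is strictly below the smallest possible value~$\beta$ of the threshold $X_m \sim U[\beta,1-\beta]$. This is precisely the ``Eve has lost control'' configuration that drives the second phase of the proof of Theorem~\ref{t_safety_liveness}.

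Conditional on this good event I would then invoke the portion of the argument for Theorem~\ref{t_safety_liveness} that analyses the rounds \emph{after} the separation has taken place: that piece contributes exactly the term $W(n,k,m_0,\ell,u)$ to the failure probability and produces, on its complement, the event $H_0 \cap \{\mathcal{N}\leq m_0+\ell u\}$ --- note that the $\psi^{m_0}$ cooling-off factor is not incurred here because we did not have to wait for separation to happen, it was delivered by the first round. A union bound over the two failure events then yields~\eqref{eq_initial0}. The main technical obstacle is not the concentration itself but the careful packaging of the round~$1$ output as the starting state for Theorem~\ref{t_safety_liveness}'s subsequent-rounds analysis; in particular one must verify that the $\eta_1(j)$'s can be treated as independent across honest~$j$'s irrespective of whether the adversary is cautious or berserk, which comes down to observing that only the queries, not the responses, carry the relevant randomness.
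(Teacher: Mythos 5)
Your proposal is correct and follows essentially the same route as the paper's own (much terser) proof: a Hoeffding bound per honest node shows each adopts opinion~$1$ in round~$1$ with probability at most $\frac{\beta-q}{4(1-q)}$, a second concentration step across honest nodes gives $\hp_1\leq\frac{\beta-q}{2(1-q)}$ (i.e.\ $\Psi=1$) up to the stated exponential failure probability, and the post-separation machinery of Theorem~\ref{t_safety_liveness} (Lemmas~\ref{l_pm_pm+1} and~\ref{l_elem_calc}) then contributes the $W(n,k,m_0,\ell,u)$ term without any $\psi^{m_0}$ factor. Your added care about conditional independence of the $\eta_1(j)$ given the adversarial strategy is exactly the domination argument formalized in Lemma~\ref{l_LD_fixed}.
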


We also mention that the estimates~\eqref{eq_safety_liveness_cautious} 
and~\eqref{eq_safety_liveness_berserk} are usually not quite sharp
because we have used some union bounds and other
``worst-case'' arguments when proving them. In practice, one 
might resort to simulations to possibly achieve better estimates;
since the present paper is mostly devoted to a theoretical analysis
of the protocol in the simplest setting, 
we defer this discussion to Section~\ref{s_final}.
For a quick concrete example on the number of necessary
rounds until consensus (with different parameters), 
see Figure~\ref{f_simul_fpc_seb} (courtesy of Sebastian M\"uller).
\begin{figure*}
\begin{center}
\includegraphics[width=\textwidth]{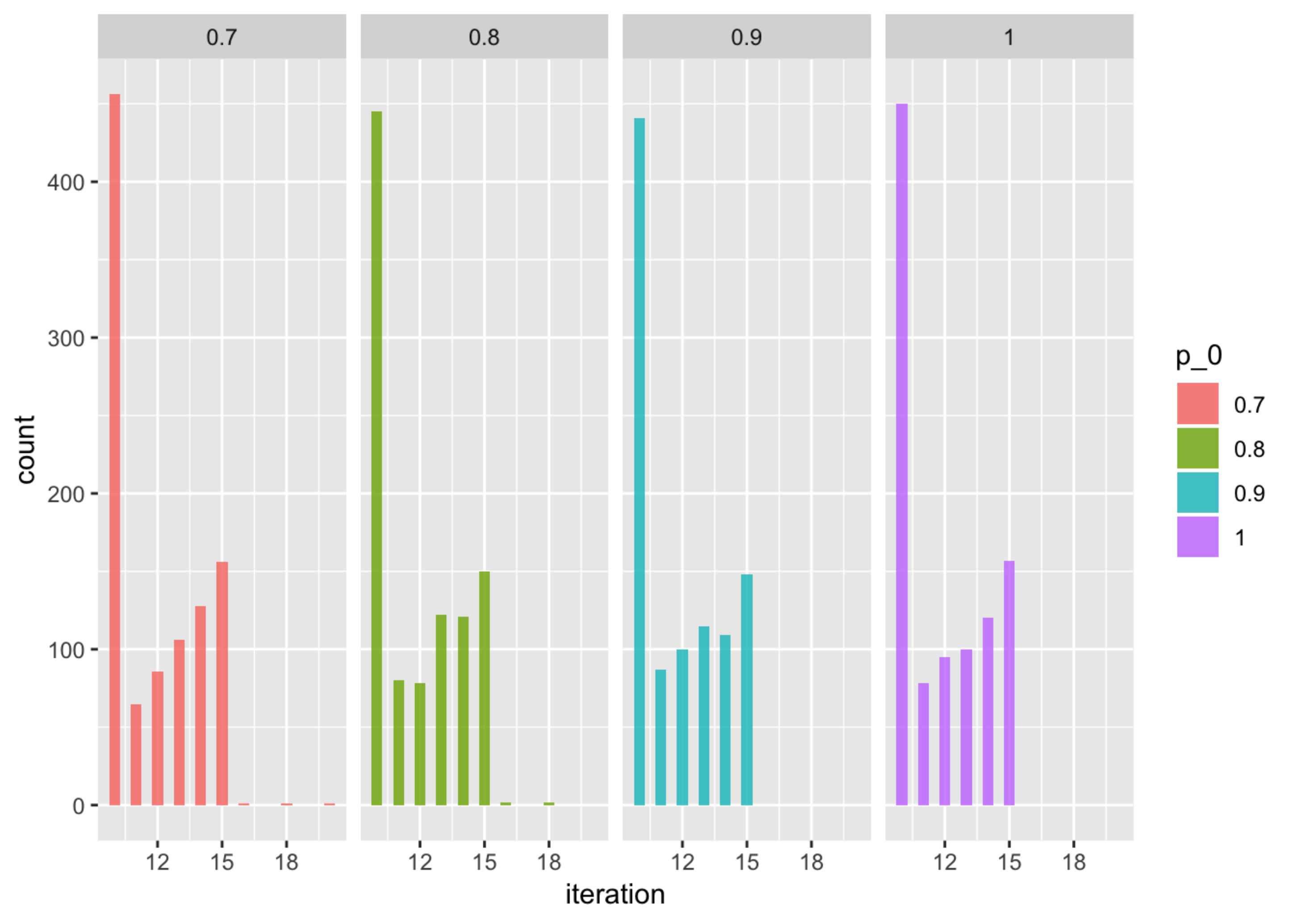}
\caption{Number of rounds till the protocol
finalizes, with $n=1000$, $a=0.75$, $b=0.85$,
$m_0=\ell=5$, $k=20$, and $q=0.1$. Here, $p_0$ is the initial proportion
of $1$-opinions among honest nodes; the round counts are on the horizontal axes,
and on the vertical axes are numbers of nodes that have finalized
on the $1$-opinion after that number of rounds.}
\label{f_simul_fpc_seb}
\end{center}
\end{figure*}
It is interesting to observe that, in most cases,
the protocol finalizes after the minimal
number $m_0+\ell=10$ of rounds
and the probability that it lasts for more 
than~$20$ rounds seems to be very small.
%
%
%
%
%

Before starting with the proofs, let us mention that, for reader's convenience, we provide the list of most important notations
of this paper in Section~\ref{s_notations} (page~\pageref{s_notations}).

\section{Proofs}
\label{s_proofs}
\noindent We start with some preliminaries.
Denote by $\mathcal{B}(m,p)$ the Binomial distribution 
with parameters $m\geq 1$ and $p\in [0,1]$.
Let us recall
the Hoeffding's inequality~\cite{hoeffding1994probability}:
if $\gamma\in (0,1)$ is a parameter and $S_m\sim \mathcal{B}(m,p)$ 
with $0<\gamma<p<1$, then
\begin{equation}
\label{H_Hoef}
\IP\big[m^{-1} S_m \leq \gamma\big] \leq \exp\{-2m(p-\gamma)^2\};
\end{equation}
the same estimate also holds
for $\IP[m^{-1} S_m \geq \gamma]$ in the case $0<p<\gamma<1$.

\begin{figure}
\begin{center}
\includegraphics{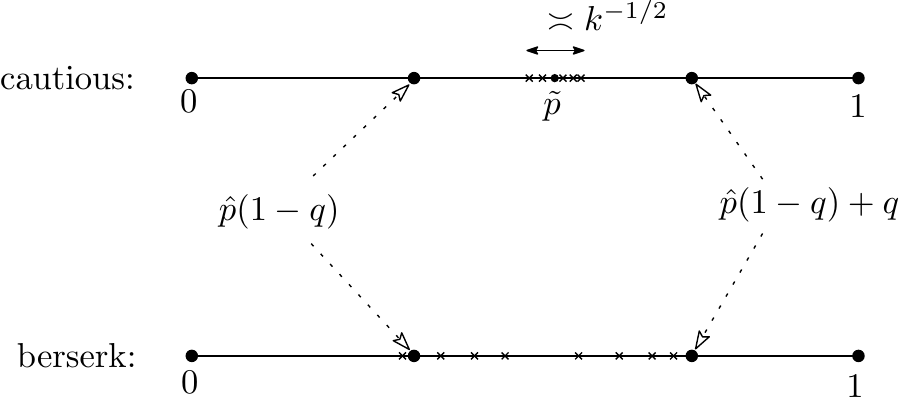}
\caption{What cautious and berserk adversaries can achieve}
\label{f_cautious_berserk}
\end{center}
\end{figure}
To better understand the difference between
cautious and berserk adversaries,
look at Figure~\ref{f_cautious_berserk}.
Here, $\hp$ is the initial proportion
of $1$-opinions between the honest nodes, 
and the crosses mark the proportion 
of $1$-responses to the~$k$ queries that the honest
nodes obtain. The cautious adversary 
can choose any $\tp \in [\hp(1-q),\hp(1-q)+q]$
(by adjusting the opinions of his nodes appropriately,
so that the overall proportion of $1$-opinions
would be~$\tp$),
and then those crosses will be (mostly) concentrated in the 
interval of length of order $k^{-1/2}$ around~$\tp$.
On the other hand, the berserk adversary
can cause the crosses to be distributed in any way
on the whole interval $[\hp(1-q),\hp(1-q)+q]$,
with some of them even going a bit out of it
(on the distance of order $k^{-1/2}$ again).

Next, we need an auxiliary result on
a likely outcome of a round in the case
when the adversary cannot make the typical
proportion of $1$-responses to be close to 
the decision threshold. Let~$\eta(j)$
be the number of $1$-responses among~$k$
queries that $j$th honest node receives;
in general, the random variables
$(\eta(j),j=1,\ldots,(1-q)n)$
are not independent, but they are 
\emph{conditionally independent} given the 
adversary's strategy. (Note that
$\eta(j)\sim \mathcal{B}(k,\tp)$
with some possibly random~$\tp$ if the adversary is cautious,
but the situation may be more complicated 
for a berserk one.) For a fixed~$\lambda\in (0,1)$, define a random variable
\[
 \hp = \frac{1}{(1-q)n} 
   \sum_{j=1}^{(1-q)n}\1{\eta(j)\geq \lambda k};
\]
so that~$\hp$ is the new proportion of $1$-opinions
among the honest nodes, given that
the ``decision threshold'' equals~$\lambda$. 
Then, the following result holds:
\begin{lem}
\label{l_LD_fixed}
\begin{itemize}
 \item[(i)] Assume that,
 conditioned on any adversarial strategy,
 there are some positive~$c$ and~$\theta$ such that 
 $\eta(j)$ is stochastically
 dominated by $\mathcal{B}(k,\lambda-c)$
 for all $j=1,\ldots,(1-q)n$,
and $\IP[\mathcal{B}(k,\lambda-c)\geq \lambda k]\leq\theta$. Then,
for any $v>0$
\begin{equation}
\label{eq_LD0}
\IP[\hp>\theta+v] \leq e^{-2(1-q)nv^2}.
\end{equation}
 \item[(ii)]  Assume that,
 conditioned on any adversarial strategy,
 $\eta(j)$ stochastically
 dominates $\mathcal{B}(k,\lambda+c)$
 for all $j=1,\ldots,(1-q)n$,
and $\IP[\mathcal{B}(k,\lambda+c)\leq \lambda k]\leq \theta$. Then,
for any $v>0$
\begin{equation}
\label{eq_LD1}
\IP[\hp<1-\theta-v] \leq e^{-2(1-q)nv^2}.
\end{equation}
\end{itemize}
\end{lem}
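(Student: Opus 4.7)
\emph{The plan.} I would condition on the adversary's strategy for the round, observe that in the resulting conditional world the indicators $Y_j := \1{\eta(j)\geq\lambda k}$ for $j=1,\ldots,(1-q)n$ are independent Bernoulli variables whose success probabilities are controlled by the stochastic-domination hypothesis, and then apply the Hoeffding inequality~\eqref{H_Hoef} to the sum $(1-q)n\cdot\hp = \sum_j Y_j$. The argument is then essentially symmetric between (i) and (ii).

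For part~(i), monotonicity of $x\mapsto\1{x\geq\lambda k}$ combined with $\eta(j)\preceq\mathcal{B}(k,\lambda-c)$ gives
\[
\IE[Y_j\mid\text{strategy}]\;=\;\IP[\eta(j)\geq\lambda k\mid\text{strategy}]\;\leq\;\IP[\mathcal{B}(k,\lambda-c)\geq\lambda k]\;\leq\;\theta,
\]
so conditionally $\sum_j Y_j$ is stochastically dominated by $\mathcal{B}((1-q)n,\theta)$. Hoeffding then produces
\[
\IP\bigl[\hp>\theta+v\,\big|\,\text{strategy}\bigr]\;\leq\;\IP\bigl[\mathcal{B}((1-q)n,\theta)\geq (1-q)n(\theta+v)\bigr]\;\leq\;e^{-2(1-q)nv^2};
\]
since this bound is uniform in the strategy, it survives removal of the conditioning and yields~\eqref{eq_LD0}. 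Part~(ii) is identical after passing to the complementary indicators $1-Y_j$: the reverse domination $\eta(j)\succeq\mathcal{B}(k,\lambda+c)$ gives $\IP[\eta(j)<\lambda k\mid\text{strategy}]\leq\IP[\mathcal{B}(k,\lambda+c)\leq\lambda k]\leq\theta$, and the same Hoeffding step applied to the sum $\sum_j(1-Y_j)$ bounds $\IP[1-\hp>\theta+v]$ by $e^{-2(1-q)nv^2}$, which is~\eqref{eq_LD1}.

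The one step I would spend real care on -- and which I regard as the main (and essentially only) obstacle -- is justifying that, conditional on the adversarial strategy, the random variables $\eta(1),\ldots,\eta((1-q)n)$ are independent, because this is what legitimises treating $\sum_j Y_j$ as a sum of conditionally independent Bernoullis. Here the model matters: once the strategy is fixed (for the cautious case, the pre-committed opinions; for the berserk case, the full query-by-query response function together with any auxiliary randomness the adversary uses), and the pre-round opinions of all honest nodes are held fixed, the only remaining randomness entering $\eta(j)$ is node~$j$'s independent uniform choice of its $k$ query targets. Since different honest nodes sample these targets with fresh, independent randomness, the vector $(\eta(1),\ldots,\eta((1-q)n))$ is conditionally independent as required. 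With this in place, the rest of the lemma is a routine stochastic-domination-plus-Hoeffding calculation and I do not expect any further difficulty.
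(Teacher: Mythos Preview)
Your proposal is correct and follows essentially the same route as the paper: the paper's proof is a terse two-sentence argument observing that $(1-q)n\hp$ is stochastically dominated by a Binomial with parameter~$\theta$ and then invoking Hoeffding~\eqref{H_Hoef}, with part~(ii) declared analogous. You have simply filled in the details the paper omits (the conditioning on strategy, the conditional independence of the~$\eta(j)$, and the explicit Bernoulli-indicator reduction), which is entirely appropriate.
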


\begin{proof}
For~(i), we observe that $(1-q)n\hp$ is 
stochastically dominated by $\mathcal{B}(n,\theta)$,
and then~\eqref{eq_LD0} follows from~\eqref{H_Hoef}.
The proof of the part~(ii) is completely analogous.
\end{proof}
Note that, by~\eqref{H_Hoef},
$\IP[\mathcal{B}(k,\lambda-c)\geq \lambda k]\leq e^{-2kc^2}$
(and the same holds for 
$\IP[\mathcal{B}(k,\lambda+c)\leq \lambda k]$),
so we will normally use Lemma~\ref{l_LD_fixed}
with $\theta = e^{-2kc^2}$.

Another elementary fact we need is
\begin{lem}
\label{l_elem_calc}
 Let $(\xi_m^{(j)}, m\geq 1), j=1,\ldots,N$ be~$N$
sequences of independent Bernoulli 
trials\footnote{the sequences themselves are \emph{not}
assumed to be independent between each other}
with success probability $h\in (0,1)$.
For $j=1,\ldots,N$ define 
\[
 \tau^{(1)}_j = \min\big\{m\geq \ell : 
  \xi_m^{(j)}=\xi_{m-1}^{(j)}=\ldots 
  = \xi_{m-\ell+1}^{(j)}=1\big\}
\]
and
\[
 \tau^{(0)}_j = \min\big\{m\geq \ell : 
  \xi_m^{(j)}=\xi_{m-1}^{(j)}=\ldots 
  = \xi_{m-\ell+1}^{(j)}=0\big\}
\]
to be the first moments when runs of~$\ell$
ones (respectively, zeros) are observed in $j$th
sequence. 
Then, for all $u\in \N$,
\begin{equation}
\label{eq_elem_calc}
 \IP[\tau^{(1)}_j\leq \ell u, \tau^{(1)}_j<\tau^{(0)}_j]
  \geq 1 - (1-h^\ell)^u - \Big(\frac{1-h}{h}\Big)^{\ell-1}
\end{equation}
for all $j=1,\ldots,N$,
and
\begin{equation}
\label{eq_elem_calc_all}
 \IP[\tau^{(1)}_j\leq \ell u, \tau^{(1)}_j<\tau^{(0)}_j,
 \forall j=1,\ldots,N]
  \geq 1 - N\Big((1-h^\ell)^u +
    \Big(\frac{1-h}{h}\Big)^{\ell-1}\Big).
\end{equation}
\end{lem}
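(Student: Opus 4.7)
\smallskip
\noindent\emph{Proof plan.} I would first prove \eqref{eq_elem_calc} and then obtain \eqref{eq_elem_calc_all} by a union bound over $j=1,\ldots,N$ (which requires no independence between the sequences, so the ``\emph{the sequences themselves are not assumed to be independent}'' caveat is harmless). For fixed $j$, the event in \eqref{eq_elem_calc} is the complement of
\[
 \{\tau^{(1)}_j > \ell u\} \;\cup\; \{\tau^{(0)}_j < \tau^{(1)}_j\},
\]
where I have used that $\tau^{(0)}_j = \tau^{(1)}_j$ is impossible (that position would be simultaneously $0$ and $1$). A union bound reduces the task to bounding these two probabilities by $(1-h^\ell)^u$ and $\big(\frac{1-h}{h}\big)^{\ell-1}$, respectively.

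For the first probability, I would partition $\{1,\ldots,\ell u\}$ into $u$ consecutive disjoint blocks $B_i = \{(i-1)\ell + 1, \ldots, i\ell\}$ of length $\ell$. The events $E_i = \{\xi_m^{(j)} = 1 \text{ for all } m \in B_i\}$ are independent (as the $\xi_m^{(j)}$ are independent) and each has probability $h^\ell$. If any $E_i$ occurs, then trivially $\tau^{(1)}_j \leq i\ell \leq \ell u$. Hence $\IP[\tau^{(1)}_j > \ell u] \leq \IP[\cap_i E_i^c] = (1-h^\ell)^u$.

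For the second probability, I would decompose the sequence $(\xi_m^{(j)})_{m \geq 1}$ into its maximal monochromatic runs: let $Y_1, Y_2, \ldots \in \{0,1\}$ be the values (alternating) and $G_1, G_2, \ldots$ the lengths of these runs, so that $Y_1$ equals $1$ with probability $h$, a $1$-run has length $\geq \ell$ independently with probability $p := h^{\ell-1}$, and a $0$-run has length $\geq \ell$ independently with probability $q := (1-h)^{\ell-1}$. The event $\{\tau^{(0)}_j < \tau^{(1)}_j\}$ is exactly the event that the first run of length $\geq \ell$ is a $0$-run. Conditioning on $Y_1$ and summing the two geometric series (the runs alternate in type, so each consecutive pair short/short has probability $(1-p)(1-q)$) gives
\[
 \IP[\tau^{(0)}_j < \tau^{(1)}_j] = h\cdot\frac{(1-p)q}{1-(1-p)(1-q)} + (1-h)\cdot\frac{q}{1-(1-p)(1-q)} = \frac{q(1-hp)}{p+q-pq}.
\]
It then remains to check $\frac{q(1-hp)}{p+q-pq} \leq q/p$, which after clearing denominators reduces to $-hp^2 \leq q(1-p)$, a trivially true inequality (LHS $\leq 0$, RHS $\geq 0$).

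The main obstacle is essentially just the geometric-race computation in the third step: setting up the independence of run lengths and then correctly bookkeeping which runs are $1$-runs versus $0$-runs. Once the formula $\frac{q(1-hp)}{p+q-pq}$ is in hand, the inequality with $q/p$ is a one-line algebraic check, and \eqref{eq_elem_calc_all} follows immediately from \eqref{eq_elem_calc} by summing the failure probabilities across $j$.
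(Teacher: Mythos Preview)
Your argument is correct and essentially mirrors the paper's proof: the same block decomposition gives $\IP[\tau^{(1)}_j>\ell u]\le(1-h^\ell)^u$, the same exact formula $\frac{q(1-hp)}{p+q-pq}$ for the race probability is obtained (the paper derives it in the Appendix via a $2\times 2$ linear system in $p_0,p_1$ rather than by summing the geometric series directly, but this is the same conditioning on $\xi_1$), and the same crude bound $q/p=\big(\tfrac{1-h}{h}\big)^{\ell-1}$ and union bounds finish. One cosmetic point: your local use of $q$ for $(1-h)^{\ell-1}$ collides with the paper's global $q$ (adversarial fraction), so rename it when merging.
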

\begin{proof}
First, it is clear that
\begin{equation}
\label{tau_blocks}
 \IP[\tau^{(1)}_j \leq \ell u] \geq 1 - (1-h^\ell)^u
\end{equation}
(divide the time interval $[1,\ell u]$ into~$u$
subintervals of length~$\ell$ and note that
each of these subintervals is all-$1$
with probability~$h^\ell$).
Then, the following is an easy exercise on
computing probabilities via conditioning
(for the sake of completeness, we prove this
fact in the Appendix):
\begin{equation}
\label{runs}
 \IP[\tau^{(1)}_j<\tau^{(0)}_j] = 1
  - \frac{(1-h)^{\ell-1}(1-h^\ell)}
  {h^{\ell-1}+(1-h)^{\ell-1}-(h(1-h))^{\ell-1}}.
\end{equation}

Observe that~\eqref{runs} implies that
(since $1-h^\ell\leq 1$ and 
$(1-h)^{\ell-1}-(h(1-h))^{\ell-1}\geq 0$)
\[
\IP[\tau^{(1)}_j<\tau^{(0)}_j]
\geq 1 - \Big(\frac{1-h}{h}\Big)^{\ell-1},
\]
and so, using the above together with~\eqref{tau_blocks}
and the union bound, we obtain~\eqref{eq_elem_calc}.
The relation~\eqref{eq_elem_calc_all}
is then a direct consequence of~\eqref{eq_elem_calc}
(again, with the union bound).
\end{proof}


%

To prove our main results, we need some additional notation.
Let~$\varrho(j)$ be the round when the $j$th (honest)
node finalizes its opinion. 
Denote
\[
 R_m = \{ j: \varrho(j)\leq m\}
\]
to be the subset of honest nodes that finalized
their opinions by round~$m$. 
Let also~$\hxi_m(j)$ be the opinion of $j$th node after the 
$m$th round and  
\begin{equation}
\label{def_hp_m}
 \hp_m = \frac{1}{(1-q)n}
  \sum_{j=1}^{(1-q)n} \hxi_m(j)
\end{equation}
 be the proportion of $1$-opinions among the honest nodes after the $m$th round
in the original system.
\begin{proof}[Proof of Theorem~\ref{t_safety_liveness}.]
Let us define the random variable
\begin{equation}
\label{def_Psi}
 \Psi = \min\Big\{m\geq 1: \hp_m\leq \frac{\beta-q}{2(1-q)}
 \text{ or }\hp_m\geq 1-\frac{\beta-q}{2(1-q)}\Big\}
\end{equation}
to be the round after which 
the proportion of $1$-opinions among the honest
nodes either becomes ``too small'', or ``too large''.
We now need the following fact:
\begin{lem}
\label{l_dominate_Psi}
For all~$s\leq m_0+\ell$,
 it holds that (recall~\eqref{df_psi_cautious} 
 and~\eqref{df_psi_berserk})
\begin{equation}
\label{eq_dominate_Psi}
 \IP[\Psi>s] \leq 
  \begin{cases}
    \big(\psi_{cau}(n,k)\big)^{s-1}, &
       \text{for cautious adversary},\\
        \big(\psi_{ber}(n,k)\big)^{s-1},
        \vphantom{\int\limits^{B}}&
       \text{for berserk adversary}.
  \end{cases}
\end{equation}
\end{lem}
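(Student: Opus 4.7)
The plan is to bound $\IP[\Psi > s]$ by establishing a one-step geometric decay
\[
\IP[\Psi > m \mid \mathcal{F}_{m-1}]\,\1{\Psi > m-1} \;\leq\; \psi\,\1{\Psi > m-1}
\]
for each $m \in \{2,\ldots,s\}$ (with $\psi$ equal to $\psi_{cau}$ or $\psi_{ber}$ according to the adversary type), and then iterating via the tower property. The base case $s=1$ is trivial because $\psi^{0}=1$. The hypothesis $s \leq m_0+\ell$ is what guarantees that no honest node can have finalized its opinion by round~$s$, so every honest node is active and updates by the rule of Section~\ref{s_parsetup} in each of the rounds $1,\dots,s$.

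The heart of the argument is the single-round estimate. First I would fix a buffer $c = \sqrt{\frac{1}{2k}\ln\frac{4(1-q)}{\beta-q}}$, chosen so that $e^{-2kc^{2}} = \frac{\beta-q}{4(1-q)}$. Conditioning on $\mathcal{F}_{m-1}$ and on an arbitrary adversarial strategy in round~$m$, each $\eta_m(j)$ is Binomial with parameter $\tp_j \in [\hp_{m-1}(1-q),\hp_{m-1}(1-q)+q]$ (a common $\tp$ for a cautious adversary, possibly varying $\tp_j$ for a berserk one). The independent threshold $X_m \sim U[\beta,1-\beta]$ drives a two-sided dichotomy: if $X_m$ sits at distance at least $c$ above every $\tp_j$, Lemma~\ref{l_LD_fixed}(i) with $\lambda=X_m$, $\theta=\frac{\beta-q}{4(1-q)}$, and $v=\frac{\beta-q}{4(1-q)}$ forces $\hp_m \leq \frac{\beta-q}{2(1-q)}$ outside a set of conditional probability $\exp\!\big(-\frac{n(\beta-q)^{2}}{8(1-q)}\big)$, triggering escape toward $0$-consensus; the symmetric case ``$X_m$ at distance $\geq c$ below every $\tp_j$'' is handled identically by part~(ii), giving escape toward $1$-consensus.

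Consequently, on $\{\Psi > m-1\}$ the conditional probability $\IP[\Psi > m \mid \mathcal{F}_{m-1}]$ is bounded by the $U[\beta,1-\beta]$-probability that $X_m$ lands in the ``bad band'' around the set of possible $\tp_j$'s, plus $2\exp(\dots)$ coming from the two Hoeffding tails. That band has width $2c$ for a cautious adversary and $q+2c$ for a berserk one, so its probability is at most $\frac{2c}{1-2\beta}$ and $\frac{q+2c}{1-2\beta}$ respectively, matching the second terms of $\psi_{cau}$ and $\psi_{ber}$ up to some numerical slack that the paper absorbs into its stated exponential constant. Iterating over $m=2,\dots,s$ yields $\IP[\Psi > s] \leq \psi^{s-1}$, as claimed.

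The main obstacle will be the berserk case: because the adversary may answer different queries differently within the same round, the $\eta_m(j)$'s are not identically distributed, so the stochastic-domination hypothesis of Lemma~\ref{l_LD_fixed} must be checked \emph{uniformly in $j$}. This forces the bad band to be widened from a single point to the whole interval $[\hp_{m-1}(1-q)-c,\hp_{m-1}(1-q)+q+c]$ of length $q+2c$, which is exactly the source of the extra additive $q$ in $\psi_{ber}$ and, ultimately, of the gap between the cautious security threshold $q<\beta$ and the berserk one $q<1-2\beta$. A secondary technical point is that the $\eta_m(j)$'s are only conditionally independent given the adversary's round-$m$ strategy, but that is precisely the regime in which Lemma~\ref{l_LD_fixed} is stated.
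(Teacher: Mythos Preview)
Your proposal is correct and follows essentially the same route as the paper's proof: both condition on the previous round's state, use the uniform law of $X_m$ to bound the probability that the threshold lands within a ``bad band'' of width $2c$ (cautious) or $q+2c$ (berserk) around the possible $\tp$-values, and apply Hoeffding (packaged in Lemma~\ref{l_LD_fixed}) on the complementary event to force $\hp_m$ into one of the two supermajority regions with the stated exponential error. Your write-up is in fact more explicit than the paper's in two places --- the tower-property iteration yielding the exponent $s-1$, and the berserk case, which the paper dismisses as ``quite analogous'' --- and the ``numerical slack'' you flag is real: your Hoeffding exponent $\frac{n(\beta-q)^2}{8(1-q)}$ is four times the one recorded in $\psi_{cau}$, so your bound is at least as strong as the stated $\psi$.
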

\begin{proof}
Observe that $s\leq m_0+\ell$ implies that a node
cannot finalize its opinion before round~$s$.
Consider first the case of a cautious adversary.
Abbreviate (for this proof) $\mu=\frac{\beta-q}{4(1-q)}$.
Let $m\geq 2$ and observe that,
for \emph{any} fixed $h\in [0,1]$ we have
(recall that $X_m\sim U[\beta,1-\beta]$)
\begin{align}
 \IP\big[e^{-2k(X_m-h)^2}\geq \mu\big]
 &= \IP\Big[(X_m-h)^2\leq\frac{\ln\mu^{-1}}{2k}\Big]
 \nonumber\\
 &= \IP\Big[h-\sqrt{\frac{\ln\mu^{-1}}{2k}} \leq
X_m\leq h+\sqrt{\frac{\ln\mu^{-1}}{2k}}\Big]
 \nonumber\\
 &\leq (1-2\beta)^{-1}\sqrt{\frac{2\ln\mu^{-1}}{k}}.
\label{calc_sqrt_k}
\end{align}
Now, assume that $\tp_{m-1}=h$.
Under this, using~\eqref{H_Hoef}
 and~\eqref{calc_sqrt_k}, we obtain
 by conditioning on the value of~$X_m$
\begin{align}
 \IP[\hp_m \in (2\mu,1-2\mu)] &=
  \IE\IP[\hp_m\in (2\mu,1-2\mu)\mid X_m]
  \nonumber\\
  &= \IE\big(\IP[\hp_m\in (2\mu,1-2\mu)\mid X_m]
         \1{e^{-2k(X_m-h)^2}< \mu}
    \nonumber\\      
   &\qquad  + \IP[\hp_m\in (2\mu,1-2\mu)\mid X_m]
         \1{e^{-2k(X_m-h)^2}\geq \mu}\big)
  \nonumber\\         
  &\leq \IE\big(\IP[\hp_m>2\mu\mid X_m]
         \1{e^{-2k(X_m-h)^2}< \mu,h<X_m}
    \nonumber\\      
    &\qquad + \IP[\hp_m<1-2\mu\mid X_m]
         \1{e^{-2k(X_m-h)^2}< \mu,h>X_m}
    \nonumber\\      
   &\qquad  + 
         \1{e^{-2k(X_m-h)^2}\geq \mu}\big)
  \nonumber\\   
  &\leq 2\IP\big[((1-q)n)^{-1}
   {\cal B}((1-q)n,e^{-2k(1-\mu)})<1-2\mu\big]
  \nonumber\\   
   &\qquad    + \IP[e^{-2k(X_m-h)^2}\geq \mu]
     \nonumber\\  
   & \leq \psi_{cau}(n,k),
  \label{long_c1}
\end{align}
recall~\eqref{df_psi_cautious}. This implies
the first comparison in~\eqref{eq_dominate_Psi}.

For a berserk adversary, the calculation is quite
analogous (recall Figure~\ref{f_cautious_berserk}),
so we omit it.
\end{proof}
Next, we need a result that shows that if one 
of the opinions has already reached a supermajority,
then this situation is likely to be preserved.

\begin{figure}
\begin{center}
\includegraphics{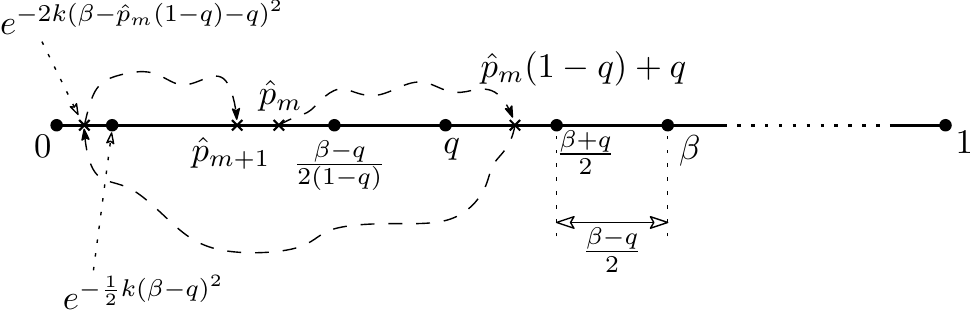}
\caption{Transition from $\hp_m$ to $\hp_{m+1}$: after $m$th
round, being $\hp_m\leq \frac{\beta-q}{2(1-q)}$, 
the adversary may ``grow'' the proportion of $1$s 
to $\hp_m(1-q)+q\leq \frac{\beta+q}{2}$. Then, since the 
difference between that and 
``the least possible threshold''~$\beta$
is at least~$\frac{\beta-q}{2}$, the probability
that an undecided node would have opinion~$1$ in the next round is 
at most~$e^{-\frac{1}{2}k(\beta-q)^2}$. Then,
with overwhelming probability~$\hp_{m+1}$
will be at most~$\frac{\beta-q}{2(1-q)}$,
and so it goes.}
\label{f_beta_q_k}
\end{center}
\end{figure}

\begin{lem}
\label{l_pm_pm+1}
 Let $m\geq 2$; in the following, $A$ will denote a 
subset of~$\{1,\ldots,(1-q)n\}$.
\begin{itemize}
 \item[(i)] Let~$G_0$ be the event 
that $\hp_m\leq \frac{\beta-q}{2(1-q)}$,
$R_{m-1}=A$, and $\hxi_{m-1}(j)=0$ for all~$j\in A$.
Then
\begin{equation}
\label{eq_pm_pm+1_0}
 \IP\Big[\hp_{m+1}\leq \frac{\beta-q}{2(1-q)}\; \big|\; G_0\Big] 
   \geq 1 - e^{-2(1-q)n\varphi_{\beta,q,k}^2}.
\end{equation}
 \item[(ii)] Let~$G_1$ be the event 
that $\hp_m\geq 1-\frac{\beta-q}{2(1-q)}$,
$R_{m-1}=A$, and $\hxi_{m-1}(j)=1$ for all~$j\in A$.
Then
\begin{equation}
\label{eq_pm_pm+1_1}
 \IP\Big[\hp_{m+1}\geq 1-\frac{\beta-q}{2(1-q)}\; \big|\; G_1\Big] 
   \geq 1 - e^{-2(1-q)n\varphi_{\beta,q,k}^2}.
\end{equation}
\end{itemize}
\end{lem}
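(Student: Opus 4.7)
The plan is to follow the outline suggested by Figure~\ref{f_beta_q_k}: the hypothesis $G_0$ caps the effective bias the adversary can induce in round $m+1$, then Hoeffding controls the chance that any single honest node adopts opinion~$1$, and finally Lemma~\ref{l_LD_fixed} concentrates the empirical proportion $\hp_{m+1}$. Part~(ii) will follow by the symmetry $0\leftrightarrow 1$.

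First, I would fix an arbitrary adversarial strategy (cautious or berserk) and condition on~$G_0$. Since the nodes in~$A$ are finalized at opinion~$0$ and $\hp_m\leq\frac{\beta-q}{2(1-q)}$, the probability that any single random query made in round $m+1$ returns~$1$ is at most
\[
\hp_m(1-q)+q\leq \tfrac{\beta-q}{2(1-q)}(1-q)+q=\tfrac{\beta+q}{2}.
\]
For a cautious adversary this is immediate; for a berserk one it suffices because the target event $\bigl\{\hp_{m+1}>\tfrac{\beta-q}{2(1-q)}\bigr\}$ is monotone increasing in each adversarial response, so the worst-case strategy is to always answer~$1$. Hence, for every non-finalized honest node~$j$, $\eta_{m+1}(j)$ is stochastically dominated by $\mathcal{B}\bigl(k,\tfrac{\beta+q}{2}\bigr)$, and these dominating variables can be taken conditionally independent across~$j$ given the strategy, since honest nodes query independently.

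Next, I would condition further on the value of $X_{m+1}$. Since $X_{m+1}\geq\beta$ almost surely, Hoeffding's inequality~\eqref{H_Hoef} gives, for any realization of $X_{m+1}$,
\[
\IP\bigl[\mathcal{B}\bigl(k,\tfrac{\beta+q}{2}\bigr)\geq X_{m+1}k\,\bigm|\,X_{m+1}\bigr]\leq \exp\bigl(-2k(X_{m+1}-\tfrac{\beta+q}{2})^2\bigr)\leq e^{-\frac{1}{2}k(\beta-q)^2},
\]
which we denote by $\theta$. Treating each finalized node in~$A$ as formally contributing~$0$ to the sum in~\eqref{def_hp_m} (which is consistent, because those nodes do not update and hold opinion~$0$), Lemma~\ref{l_LD_fixed}(i) applies conditionally on $X_{m+1}$ and the strategy, with $\lambda=X_{m+1}$, $c=X_{m+1}-\tfrac{\beta+q}{2}\geq\tfrac{\beta-q}{2}$, and the $\theta$ above. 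Choosing $v=\varphi_{\beta,q,k}=\tfrac{\beta-q}{2(1-q)}-\theta$ then yields
\[
\IP\Bigl[\hp_{m+1}>\tfrac{\beta-q}{2(1-q)}\,\Bigm|\,X_{m+1},\text{strategy},G_0\Bigr]\leq e^{-2(1-q)n\varphi_{\beta,q,k}^2};
\]
averaging over $X_{m+1}$ and over the strategy proves~\eqref{eq_pm_pm+1_0}. For part~(ii), under~$G_1$ each query returns~$0$ with probability at most $(1-\hp_m)(1-q)+q\leq\tfrac{\beta+q}{2}$, and $X_{m+1}\leq 1-\beta$, so the symmetric Hoeffding bound combined with Lemma~\ref{l_LD_fixed}(ii) and the same choice of $v$ gives~\eqref{eq_pm_pm+1_1}.

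I expect the only real subtlety to be bookkeeping the finalized nodes and the berserk adversary when invoking Lemma~\ref{l_LD_fixed}. For finalized nodes the fix is painless: set their ``new-opinion indicator'' to~$0$, which can only decrease $\hp_{m+1}$ relative to the quantity the lemma bounds. For the berserk case, the monotonicity of the target event in each adversarial response collapses the worst case to the cautious ``always answer~$1$'' strategy, so no separate berserk argument is needed; every other ingredient---Hoeffding, conditional independence of honest queries, and the conditioning on $X_{m+1}$---is routine.
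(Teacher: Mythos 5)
Your proposal is correct and follows essentially the same route as the paper: cap the achievable proportion of $1$-responses at $\hp_m(1-q)+q\leq\frac{\beta+q}{2}$, use Hoeffding with the gap $\frac{\beta-q}{2}$ to the worst-case threshold $\beta$ to get $\theta=e^{-\frac12 k(\beta-q)^2}$, and then invoke Lemma~\ref{l_LD_fixed} with $v=\varphi_{\beta,q,k}$, treating finalized nodes as contributing~$0$. The paper's own proof is just a terse pointer to Figure~\ref{f_beta_q_k} and Lemma~\ref{l_LD_fixed}; you have merely filled in the same argument in more detail, including a correct monotonicity observation that disposes of the berserk case.
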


\begin{proof}
 We prove only part~(i), the proof of the other part is
completely analogous. Now, look at Figure~\ref{f_beta_q_k}:
essentially, this is a direct 
consequence of Lemma~\ref{l_LD_fixed} with 
$\theta=e^{-2k(\frac{\beta-q}{2})^2}
=e^{-\frac{1}{2}k(\beta-q)^2}$ and $v=\varphi_{\beta,q,k}$.
Observe also that, if some honest nodes already decided 
on~$0$ definitely, it holds that 
$(1-q)n\hp_{m+1}$ is stochastically dominated
by $\mathcal{B}\big((1-q)n,e^{-\frac{1}{2}k(\beta-q)^2}\big)$.
\end{proof}

Now, we are able to conclude the proof of Theorem~\ref{t_safety_liveness}.
Let us introduce the random variable
\begin{equation}
\label{def_Z}
 Z = 
  \begin{cases}
   \min\Big\{m > \Psi:  \hp_m>\frac{\beta-q}{2(1-q)}\Big\} & \text{on }
                    \hp_\Psi \leq \frac{\beta-q}{2(1-q)},\\
   \min\Big\{m > \Psi: \hp_m<1-\frac{\beta-q}{2(1-q)}\Big\} 
   \vphantom{\int\limits^{A^A}} & \text{on }
                    \hp_\Psi \geq 1-\frac{\beta-q}{2(1-q)}              
  \end{cases}
\end{equation}
to be the first moment after~$\Psi$ when the honest nodes'
opinion has drifted away from supermajority.
Denote also
\[
 \htau^{(1)}_j = \min\big\{m\geq m_0+\ell : 
  \hxi_m^{(j)}=\hxi_{m-1}^{(j)}=\ldots 
  = \hxi_{m-\ell+1}^{(j)}=1\big\}
\]
and
\[
 \htau^{(0)}_j = \min\big\{m\geq m_0+\ell : 
  \hxi_m^{(j)}=\hxi_{m-1}^{(j)}=\ldots 
  = \hxi_{m-\ell+1}^{(j)}=0\big\}.
\]

Next, observe that
\[
  (H_0\cup H_1) \cap \{\mathcal{N}\leq m_0+\ell u\}
 \subset D_1 \cap D_2 \cap D_3, 
\]
where
\begin{align*}
 D_1 &= \{\Psi\leq m_0\},\\
 D_2 &= \{Z\geq m_0+\ell u\},\\
 D_3 &= \big\{ \text{there is }
    i\in\{0,1\} 
 \text{ such that }
   \htau^{(i)}_j \leq m_0+\ell u,
   \htau^{(i)}_j<\htau^{(1-i)}_j\\
  & \qquad \qquad \qquad\qquad \qquad\qquad\qquad\qquad
  \text{ for all }
    j=1,\ldots,(1-q)n\big\} .
\end{align*}
To obtain the estimates~\eqref{eq_safety_liveness_cautious} 
and~\eqref{eq_safety_liveness_berserk},
it is enough to note that the lower bounds
on, respectively, $\IP[D_1]$, $\IP[D_2]$, and $\IP[D_3]$,
follow from, respectively, Lemma~\ref{l_dominate_Psi},
Lemma~\ref{l_pm_pm+1}, and Lemma~\ref{l_elem_calc}
(and also the union bound).
\end{proof}

\begin{proof}[Proof of Theorem~\ref{t_initial}.]
We prove only the part~(i); the proof of the other
part is completely analogous.
In fact, to obtain the proof it is enough to observe that,
if $a-\hp_0(1-q)-q>0$ and 
$e^{-2k(a-\hp_0 (1-q) - q)^2} \leq \frac{\beta-q}{4(1-q)} $, 
then, by~\eqref{H_Hoef}, 
with probability at least $1- \exp\big(-\tfrac{1}{8}n
     \tfrac{(\beta-q)^2}{4(1-q)}\big)$ it happens that
    $\hp_1\leq \frac{\beta-q}{2(1-q)}$ 
(so, in particular, $\Psi=1$); next, the same argument
as in the proof of Theorem~\ref{t_safety_liveness}
does the work.
\end{proof}

\section{Further generalizations}
\label{s_gener}
In this section we argue that our protocol is \emph{robust},
that is, it is possible to adapt it in such a way that it is
able to work well in more ``practical''
situations. Specifically, observe that 
nodes may not always respond queries,
and the adversarial nodes sometimes may do so \emph{deliberately}.
The protocol described in Section~\ref{s_descr}
is not designed to handle this, so it needs to be amended.
There are at least two natural ways to deal 
with this situation:
\begin{itemize}
    \item[(i)] let each node to take the decision 
    based on the responses that it effectively received 
    (i.e., instead of $k^{-1}\eta_m(j)$ use 
    $\eta_m(j)/ \zeta_m(j) $, where $\zeta_m(j)$
    is the number of responses that the $j$th node 
    received in the $m$th round);
    \item[(ii)] each node queries more than~$k$ nodes,
    say, $2k$ or more; since whp the number of responses 
    received will be at least~$k$ (for
    definiteness, let us assume that the probability
    that a query is left unresponded is less than $\frac{1}{2}$),
    the node then keeps exactly~$k$ responses and discards the rest;
\end{itemize}
and it is of course also possible to combine them.
  The practical difference between these two options is 
 probably not so big at least in the case of reasonably large values of~$k$
(because then the proportions of $1$s in the responses should be roughly the same in most cases due to the Law of Large Numbers);
 for the sake of formulating the results 
 in a more clean way, let us assume
 that a node simply issues queries sequentially
 until getting exactly~$k$ responses.
    

Now, we define the notion of a \emph{semi-cautious}
adversary: every node it controls will not give 
\emph{contradicting} responses (i.e., $0$ to one node
and~$1$ to another node in the same round)
but can sometimes remain silent; since it does not make
sense for a node to remain silent altogether in a given round
(that would just reduce the fraction of the adversarial
nodes in the network), there are two possible adversarial
node behaviours:
\begin{itemize}
    \item a node answers~``$0$'' to some queries 
    and does not answer other queries;
    \item a node answers~``$1$'' to some queries 
    and does not answer other queries.
\end{itemize}
Here is the result we have for a semi-cautious adversary:
\begin{theo}
\label{t_safety_liveness_semi}
 If the adversary is semi-cautious,
assume that $\frac{1}{2-q}-\beta+\sqrt{2k^{-1}
\ln\tfrac{4(1-q)}{\beta-q}}<1-2\beta$.
Then, for any adversarial strategy, we have 
\begin{align}
\IP\big[(H_0\cup H_1)\cap\{\mathcal{N}\leq m_0+\ell u\}\big]
 &  \geq 1 - \big(\psi_{semi}(n,k)\big)^{m_0}
 -W(n,k,m_0,\ell,u),
\label{eq_safety_liveness_semi}  
\end{align}
where
\begin{equation}
\label{df_vr_theta_nk}
\psi_{semi}(n,k)
= 2\exp\big(-\tfrac{1}{8}n\tfrac{(\beta-q)^2}{4(1-q)}\big) + 
\frac{\frac{1}{2-q}-\beta+\sqrt{2k^{-1}
\ln\tfrac{4(1-q)}{\beta-q}}}{1-2\beta}.
\end{equation}
\end{theo}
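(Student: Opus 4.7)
The plan mirrors the three-event decomposition in the proof of Theorem~\ref{t_safety_liveness}: writing
\[
(H_0\cup H_1)\cap\{\mathcal{N}\le m_0+\ell u\}\supset D_1\cap D_2\cap D_3,
\]
with $D_1=\{\Psi\le m_0\}$, $D_2=\{Z\ge m_0+\ell u\}$ and $D_3$ the event that every honest node obtains $\ell$ consecutive identical opinions of the same type during $[m_0+1,m_0+\ell u]$. The bounds on $\IP[D_2^c]$ and $\IP[D_3^c]$ supplied by Lemma~\ref{l_pm_pm+1} and Lemma~\ref{l_elem_calc} transfer verbatim to the semi-cautious setting: both rely only on the fact that, once a supermajority is in place, any adversary can shift the effective $1$-proportion that an honest node sees by at most $q$ (whether through contradictory answers or through selective silence), so the margin $\tfrac{\beta-q}{2}$ survives and Hoeffding produces the factor $e^{-\frac12 k(\beta-q)^2}$. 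Together these contribute the term $W(n,k,m_0,\ell,u)$ in~\eqref{eq_safety_liveness_semi}.

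The new work is the semi-cautious counterpart of Lemma~\ref{l_dominate_Psi}, namely $\IP[\Psi>s]\le\bigl(\psi_{semi}(n,k)\bigr)^{s-1}$ for $s\le m_0+\ell$. Fix a round~$m$ and condition on the history prior to the reveal of $X_m$. Parametrize a semi-cautious strategy by an opinion split $(q_0,q_1)$ with $q_0+q_1=q$ and per-query response rates $r_0,r_1\in[0,1]$ (possibly targeted at particular honest nodes); because an honest node polls until it records exactly $k$ responses, each recorded response is conditionally Bernoulli with parameter
\[
P^*(1)=\frac{\hp_{m-1}(1-q)+\rho_1}{1-q+\rho_0+\rho_1},\qquad \rho_i=q_i r_i.
\]
Performing, as in~\eqref{long_c1}, the split into $\{e^{-2k(X_m-P^*(1))^2}<\mu\}$ and its complement (with $\mu=\tfrac{\beta-q}{4(1-q)}$), and combining with Lemma~\ref{l_LD_fixed}, yields the exponential first summand $2\exp\bigl(-\tfrac{1}{8}n\tfrac{(\beta-q)^2}{4(1-q)}\bigr)$ of $\psi_{semi}$ exactly as in the cautious case.

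The only step that genuinely differs, and that I expect to be the main obstacle, is bounding the probability that $X_m$ falls inside a ``bad window'' around a semi-cautiously realizable $P^*(1)$. One has to optimize $P^*(1)$ over $(\rho_0,\rho_1)$ with $\rho_0+\rho_1\le q$ and over the worst-case current value $\hp_{m-1}\in(2\mu,1-2\mu)$, while observing that values of $P^*(1)$ below the lower support $\beta$ of $X_m$ yield no adversarial benefit (they make the honest decisions deterministic and so drive $\hp_m$ to $0$). A careful calculation then shows that the ``useful'' portion of the reachable set of $P^*(1)$ is contained in $\bigl[\beta,\tfrac{1}{2-q}\bigr]$; adding the Hoeffding slack $\sqrt{2k^{-1}\ln\tfrac{4(1-q)}{\beta-q}}$ confines the bad window for $X_m$ to an interval of length at most $\tfrac{1}{2-q}-\beta+\sqrt{2k^{-1}\ln\tfrac{4(1-q)}{\beta-q}}$, whose probability under $X_m\sim U[\beta,1-\beta]$ equals this length divided by $1-2\beta$. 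That is exactly the second summand of $\psi_{semi}(n,k)$. Iterating the resulting per-round bound (each fresh $X_m$ is independent of the adversary's past information) gives $\IP[\Psi>m_0]\le\bigl(\psi_{semi}(n,k)\bigr)^{m_0}$, and together with the $W$-bound this produces~\eqref{eq_safety_liveness_semi}.
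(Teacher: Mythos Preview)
Your overall architecture matches the paper exactly: the same $D_1\cap D_2\cap D_3$ decomposition, the same observation that Lemmas~\ref{l_pm_pm+1} and~\ref{l_elem_calc} carry over unchanged (giving the $W$-term), and the same identification that only the analogue of Lemma~\ref{l_dominate_Psi} needs new work. The exponential first summand of $\psi_{semi}$ and the per-round iteration are also handled just as in the paper.

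The gap is in the step you flag yourself as ``the main obstacle'': the bound on the bad window for~$X_m$. Your assertion that ``the useful portion of the reachable set of $P^*(1)$ is contained in $[\beta,\tfrac{1}{2-q}]$'' is not literally true and is not what is needed. For a fixed split of adversarial nodes the reachable set is an interval $\II_\gamma$ (in the paper's parametrization $\gamma\in[0,1]$ is the fraction of ``$0$-or-silent'' nodes), and depending on $\hp_{m-1}$ this interval can lie entirely above $\tfrac{1}{2-q}$; what one must show is that $|\II_\gamma\cap[\beta,1-\beta]|\le \tfrac{1}{2-q}-\beta$ for \emph{every} admissible~$\gamma$. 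The paper supplies the missing argument: it computes
\[
\II_\gamma=\Big[\frac{\hp(1-q)}{1-(1-\gamma)q},\ \frac{\hp(1-q)+(1-\gamma)q}{1-\gamma q}\Big],
\]
observes that both endpoints are monotone decreasing in~$\gamma$, solves for the value $\gamma^*$ that symmetrizes the interval about~$\tfrac12$ to get $\II_{\gamma^*}=[\tfrac{1-q}{2-q},\tfrac{1}{2-q}]$ (independent of~$\hp$), and then uses monotonicity to conclude that for every~$\gamma$ either $[\beta,\tfrac{1-q}{2-q})$ or $(\tfrac{1}{2-q},1-\beta]$ is disjoint from~$\II_\gamma$, whence $|\II_\gamma\cap[\beta,1-\beta]|\le \tfrac{1}{2-q}-\beta$. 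Adding the Hoeffding slack $\sqrt{2k^{-1}\ln\tfrac{4(1-q)}{\beta-q}}$ to this length, and dividing by $1-2\beta$, then yields the second summand of $\psi_{semi}$. Your sketch reaches the right destination but does not contain this calculation; without it the ``careful calculation then shows'' is an unproved claim.
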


In this situation, the fact corresponding 
to Corollary~\ref{c_1/2_1/3} will be
the following (in particular, note the new 
``security threshold'' $\phi^{-2}\in(\frac{1}{3},\frac{1}{2})$
that we obtain here):
\begin{cor}
\label{c_Golden}
 For a semi-cautious adversary, we need that 
$\frac{1}{2-q}-\beta<1-2\beta$, or, equivalently, 
$q<2-\frac{1}{1-\beta}$
(it is only in this case that we will be able
to find large enough~$k$ such that the hypothesis
of Theorem~\ref{t_safety_liveness_semi}
is satisfied). Since we also still need 
 that $q<\beta$, solving
$\beta = 2-\frac{1}{1-\beta}$, 
we obtain that~$q$ must be less 
than~$\frac{3-\sqrt{5}}{2}=\frac{1}{1+\phi}=\phi^{-2}\approx0.38$,
where
 $\phi=\frac{1+\sqrt{5}}{2}$ is the Golden Ratio.
Then, as before, it is straightforward to 
show that, for a semi-cautious adversary, 
for any~$q<\frac{1}{1+\phi}$ and all large enough~$n$ 
 we are able to adjust the parameters $k,\beta,m_0,\ell$
 in such a way
that the protocol works whp (in particular, 
a $\beta$-value sufficiently close to~$\frac{1}{1+\phi}$ would work).
\end{cor}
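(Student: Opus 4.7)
The plan is to split the proof into two parts: first identify the algebraic constraints on $q$ that must hold for the hypothesis of Theorem~\ref{t_safety_liveness_semi} to be applicable, then optimize over $\beta$ to extract the threshold $\phi^{-2}$; second, verify that for any subthreshold $q$ the parameters can be tuned to push the right-hand side of~\eqref{eq_safety_liveness_semi} close to $1$.

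For the first part, I would examine $\psi_{semi}(n,k)$ in~\eqref{df_vr_theta_nk}. Since the first summand is exponentially small in $n$, a necessary and sufficient condition for $\psi_{semi}(n,k)$ to be eventually strictly less than $1$ (for large $n$ and large $k$) is that $\frac{1}{2-q}-\beta<1-2\beta$ holds strictly, so the vanishing square-root correction can be absorbed; rearranging gives $q<2-\frac{1}{1-\beta}$. Combined with the standing requirement $q<\beta$ (inherited from the definition of $\varphi_{\beta,q,k}$), this forces $q<\min\bigl(\beta,\,2-\tfrac{1}{1-\beta}\bigr)$, and the best admissible bound on $q$ is obtained by maximizing this minimum over $\beta\in(0,1/2)$.

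The maximum is attained where the two entries coincide: from $\beta=2-\frac{1}{1-\beta}$ one gets $\beta(1-\beta)=1-2\beta$, hence $\beta^{2}-3\beta+1=0$, whose only root in $(0,1/2)$ is $\beta^{\star}=\frac{3-\sqrt{5}}{2}$. A short algebraic identity using $\phi=\frac{1+\sqrt{5}}{2}$ (equivalently $\phi^{2}=\phi+1$) yields $\frac{1}{1+\phi}=\frac{2}{3+\sqrt{5}}=\frac{3-\sqrt{5}}{2}=\phi^{-2}$, matching the announced threshold.

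For the second part, given any $q<\phi^{-2}$, I would pick $\beta\in(q,1/2)$ sufficiently close to $\beta^{\star}$ so that both $q<\beta$ and $q<2-\frac{1}{1-\beta}$ hold strictly, producing a positive slack $\delta:=(1-2\beta)-\bigl(\frac{1}{2-q}-\beta\bigr)>0$. Then I would choose $k$ large enough that $\sqrt{2k^{-1}\ln\frac{4(1-q)}{\beta-q}}<\delta$, making the second summand of~\eqref{df_vr_theta_nk} strictly less than $1$; together with the vanishing of the first summand for large $n$, this yields $\psi_{semi}(n,k)$ bounded away from $1$. Choosing $m_{0}$, $\ell$, $u$ in the spirit of Corollary~\ref{c_log_n} then drives both $\bigl(\psi_{semi}\bigr)^{m_{0}}$ and $W(n,k,m_{0},\ell,u)$ to be polynomially small in $n$, completing the argument. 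The only mildly nontrivial ingredient is the algebraic identification of the root of $\beta^{2}-3\beta+1=0$ with $\phi^{-2}$; everything else is a direct reuse of the asymptotic recipe already employed for Theorem~\ref{t_safety_liveness}.
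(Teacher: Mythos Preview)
Your proposal is correct and follows essentially the same line as the paper, which in fact embeds the entire argument inside the statement of the corollary itself (there is no separate proof). You simply flesh out the algebra a bit more---writing out the quadratic $\beta^{2}-3\beta+1=0$ and verifying the identity $\tfrac{3-\sqrt{5}}{2}=\tfrac{1}{1+\phi}=\phi^{-2}$---and make the parameter-tuning step explicit, but the approach is identical.
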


\begin{proof}[Proof of Theorem~\ref{t_safety_liveness_semi}] 
As observed before,
an ``always-silent'' strategy is not interesting
for an adversarial node, since this will, in practice,
only reduce
their quantity. Now, assume that, for some $\gamma\in [0,1]$,
\begin{itemize}
    \item $\gamma q n$ adversarial nodes reply ``$0$''
    or remain silent;
    \item $(1-\gamma) q n$ adversarial nodes reply ``$1$''
    or remain silent.
\end{itemize}
Then, if the adversary wants to 
decrease a honest node's confidence in the $1$-opinion,
those nodes who may answer ``$1$'' will remain silent,
and so
with probability $\frac{1-q}{1-q+\gamma q}$
the response will be obtained from a honest node, while
with probability $\frac{\gamma q}{1-q+\gamma q}$
the response will be obtained from an adversarial node.
This gives 
\[
 \hp \frac{1-q}{1-q+\gamma q}
  = \frac{\hp(1-q)}{1-(1-\gamma)q}
\]
as the ``lower limit'' for the (expected) proportion
of~$1$s in the queries.
Analogously, if the adversary wants to 
increase an honest node's confidence in the $1$-opinion
those nodes who may answer ``$0$'' will remain silent, and so
with probability $\frac{1-q}{1-q+(1-\gamma) q}$
the response will be obtained from a honest node, while
with probability $\frac{(1-\gamma) q}{1-q+(1-\gamma) q}$
the response will be obtained from an adversarial node.
This gives 
\[
 \hp \frac{1-q}{1-q+(1-\gamma) q} 
   + \frac{(1-\gamma)q}{1-q+(1-\gamma) q}
  = \frac{\hp(1-q)+(1-\gamma)q}{1-\gamma q}
\]
as the corresponding ``upper limit''.
So, analogously to Figure~\ref{f_cautious_berserk},
the semi-cautious adversary can achieve
the ``crosses'' to be distributed on the interval
\begin{equation}
\label{df_IIgamma}   
\II_\gamma := \Big[\frac{\hp(1-q)}{1-(1-\gamma)q},
 \frac{\hp(1-q)+(1-\gamma)q}{1-\gamma q}\Big]
\end{equation}
in any way. 
Now, it is elementary to see that both endpoints
of the above interval decrease when~$\gamma$ increases;
if we want to make it symmetric (around~$\frac{1}{2}$),
we need to solve 
\[
 \frac{\hp(1-q)}{1-(1-\gamma)q}
 = 1 -  \frac{\hp(1-q)+(1-\gamma)q}{1-\gamma q},
\]
or, equivalently
\[
  \frac{\hp}{1-(1-\gamma)q}
 =  \frac{1-\hp}{1-\gamma q}
\]
for~$\gamma$. This gives the solution
$\gamma^* = q^{-1}(2\hp-1)+(1-\hp)$.
After substituting~$\gamma^*$ to~\eqref{df_IIgamma}, 
the symmetrized interval becomes
\[
\II_{\gamma^*}=\Big[\frac{1-q}{2-q},
 \frac{1}{2-q}\Big]
\]
(somewhat unexpectedly, because it doesn't depend on~$\hp$
anymore).
\begin{figure}
\begin{center}
\includegraphics{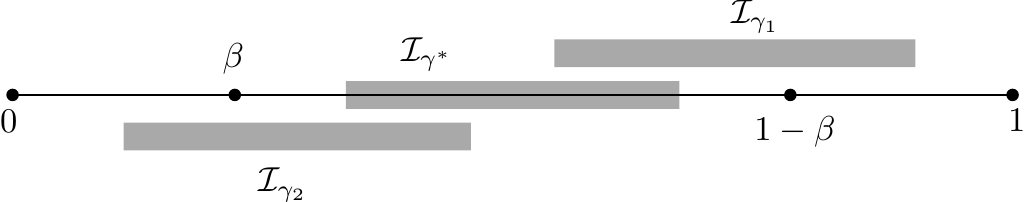}
\caption{The ``intervals of control'' of a semi-cautious
adversary, for $\gamma_1<\gamma^*<\gamma_2$}
\label{f_moving_int}
\end{center}
\end{figure}
It is actually worth noting that~$\gamma^*$ does not necessarily
belong to~$[0,1]$
(so it is not always possible to make this interval symmetric), 
but it does not pose a problem due to 
the following. Look at Figure~\ref{f_moving_int}:
due to the monotonicity, 
\begin{equation}
\label{max_intersection}
 \max_{\gamma\in[0,1]}
  \big|[\beta,1-\beta]\cap\II_\gamma\big| \leq 
   \frac{1}{2-q}-\beta ;
\end{equation}
indeed, for all~$\gamma$ we see that 
either the interval~$[\beta,\frac{1-q}{2-q})$
or the interval~$(\frac{1}{2-q}, 1-\beta]$
is a subset of $[\beta,1-\beta]\setminus \II_\gamma$.

This essentially takes care of the argument in the proof
of Lemma~\ref{l_dominate_Psi} (since we now understand 
what is the minimal length of the interval that the adversary
cannot control),
and the rest of the proof
is completely analogous to that of Theorem~\ref{t_safety_liveness}:
indeed, as observed before,
the adversary loses control after the random time~$\Psi$.
\end{proof}
We also observe that Theorem~\ref{t_initial} remains
valid also for a semi-cautious adversary.

Next, let us discuss what do we really need from the 
(decentralized) random number generator. 
In fact, it is not so much:
we need that, regardless of the past, 
with probability at least~$\theta$ (where $\theta>0$ is a fixed parameter)
the next outcome is a uniform random variable
which is ``unpredictable'' for the adversary; 
this random number is seen by at least 
(1-$\delta$) proportion of honest nodes, 
where~$\delta$ is reasonably small. 
What we can prove in such a situation depends on what
the remaining $\delta(1-q)n$ honest nodes use as their
decision thresholds: they can use some ``second candidate''
(in case there is an alternative 
source of common randomness), or they can choose their 
thresholds independently and randomly, etc.
Each of such situations would need to be treated separately,
which is certainly doable, but left out of this paper.
Let us note, though, that
the ``worst-case'' assumption 
is that the adversary can ``feed'' the (fake) decision
thresholds to those $\delta(1-q)n$ honest
nodes. This 
would effectively mean that these 
nodes would behave as cautious adversaries in the next round
(which matters if the random time~$\Psi$ did not yet occur).
Therefore, for the sake of obtaining 
bounds like~\eqref{eq_safety_liveness_cautious}--\eqref{eq_safety_liveness_berserk} 
and~\eqref{eq_safety_liveness_semi} we 
can simply pretend that the value of~$q$ is increased by~$\delta$.

Now, assuming that $\delta=0$, it is easy to
figure out how this will affect our results: indeed,
in our proofs, all random thresholds matter only until~$\Psi$.
It is then straightforward to obtain the following fact:
\begin{prop}
\label{p_weaker_RNG}
Assume the above on the random number generation
(with $\theta\in (0,1)$ and $\delta=0$). Then, 
the estimates 
\eqref{eq_safety_liveness_cautious}--\eqref{eq_safety_liveness_berserk} 
and~\eqref{eq_safety_liveness_semi}
remain valid with $1-\theta+\theta\psi_{*}(n,k)$
on the place of $\psi_{*}(n,k)$
(with $* \in \{\text{cau}, \text{ber}, \text{semi}\}$).
\end{prop}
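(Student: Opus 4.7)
The plan is to rerun the proofs of Theorems~\ref{t_safety_liveness} and~\ref{t_safety_liveness_semi}, modifying only Lemma~\ref{l_dominate_Psi}, since the random thresholds enter those arguments exclusively there. Write $E_m$ for the event that in round $m$ the generator outputs a fresh uniform-on-$[\beta,1-\beta]$ threshold that is independent of the adversary's choices and (by $\delta=0$) visible to every honest node. The standing hypothesis on the RNG gives $\IP[E_m \mid \mathcal{F}_{m-1}]\geq \theta$ for any natural history filtration $\mathcal{F}_{m-1}$, regardless of the adversary's past behaviour.

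On the event $E_m$ the original one-round estimate applies verbatim: the chain of inequalities in \eqref{long_c1} that bounded $\IP[\hp_m\in(2\mu,1-2\mu)\mid \tp_{m-1}=h]$ by $\psi_{*}(n,k)$ used only that $X_m$ is uniform on $[\beta,1-\beta]$ and independent of the past, so it carries over unchanged (and similarly for the semi-cautious variant in the proof of Theorem~\ref{t_safety_liveness_semi}). On the complement $E_m^c$ I would simply take the trivial bound $1$, since the adversary may have chosen the threshold. Combining and iterating then yields
\[
\IP[\Psi>m\mid \Psi>m-1,\mathcal{F}_{m-1}]\leq (1-\theta)+\theta\,\psi_{*}(n,k),
\]
and hence $\IP[\Psi>s]\leq (1-\theta+\theta\,\psi_{*}(n,k))^{s-1}$ for $s\leq m_0+\ell$ (no finalization having yet occurred in that regime, just as before), with $*\in\{\text{cau},\text{ber},\text{semi}\}$.

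Everything else in the proofs carries over with no change. Lemma~\ref{l_pm_pm+1} uses only the deterministic fact that $X_m\in[\beta,1-\beta]$, which is true irrespective of who produces it, so the gap argument of Figure~\ref{f_beta_q_k} still guarantees preservation of a supermajority with probability at least $1-e^{-2(1-q)n\varphi_{\beta,q,k}^2}$ per round, whence the bound on $\IP[D_2^c]$ is unaffected. Lemma~\ref{l_elem_calc}, which governs $\IP[D_3^c]$, does not involve the thresholds at all. Plugging the new bound on $\IP[\Psi>m_0]=\IP[D_1^c]$ into the same union-bound decomposition $(H_0\cup H_1)\cap\{\mathcal{N}\leq m_0+\ell u\}\supset D_1\cap D_2\cap D_3$ used to derive \eqref{eq_safety_liveness_cautious}--\eqref{eq_safety_liveness_berserk} and \eqref{eq_safety_liveness_semi} reproduces those estimates exactly, with $\psi_{*}(n,k)$ replaced by $1-\theta+\theta\,\psi_{*}(n,k)$.

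The only point that deserves genuine care — and where I expect the sole nontrivial verification — is the conditional-independence bookkeeping: one must read the ``unpredictable for the adversary'' clause in the RNG specification as the statement that on $E_m$ the threshold $X_m$ is conditionally independent, given $\mathcal{F}_{m-1}$, of the adversary's round-$m$ response vector (equivalently, of $\tp_{m-1}$). Once this reading is fixed, the Markov-style iteration above is immediate, and the rest is mechanical substitution into the existing proofs.
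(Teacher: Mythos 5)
Your proposal is correct and follows essentially the same route as the paper, which only sketches the argument: the random thresholds enter the proofs solely through Lemma~\ref{l_dominate_Psi}, and conditioning each round on whether the generator produced a good (uniform, unpredictable) threshold replaces the per-round bound $\psi_{*}(n,k)$ by $1-\theta+\theta\psi_{*}(n,k)$, while Lemmas~\ref{l_pm_pm+1} and~\ref{l_elem_calc} are unaffected since they use only $X_m\in[\beta,1-\beta]$. Your filling-in of the conditional-independence bookkeeping is a correct and welcome elaboration of what the paper leaves implicit.
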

In view of the above result,
let us stress that
one of the main ideas of this paper is: 
we use a ``rather weak'' consensus (on the random numbers,
as above) to obtain a ``strong'' consensus on the value of a bit 
(i.e., validity of a transaction). 
Also, let us observe that a
partial control of the random numbers does not give 
access to a lot of power (in the worst case the adversary 
would delay the consensus a bit, but that is all), 
so there is not much need to be restrictive on the degree 
of decentralization for that part\footnote{in
other words, it may make sense that different parts 
of the system are decentralized to a different degree, there is nothing a priory wrong with it}: 
a smaller subcommittee can take care of the random numbers'
generation, and some 
VDF-based random number generation scheme 
(such as~\cite{Lenstra_Wes17})
may be used
to further prevent this subcommittee from leaking the numbers 
before the due time).

\section{Notations used in this paper}
\label{s_notations}
Here, for reader's convenience, we provide a list of most important notations used 
in the paper.
\begin{itemize}
 \item $n\in\N$ is the total number of nodes in the system, $k\in\N$ is the number 
of queries each node makes in one round, $q\in [0,1)$ is the proportion of adversarial 
nodes;
 \item $1/2 < a \leq b <1$ are the threshold limits in the first round,
 $\beta\in (0,1/2)$ is the threshold limit parameter in the subsequent rounds;
 \item in the same round, 
 a \emph{cautious} adversarial node always responds the queries in the same way,
 a \emph{berserk} one can give any responses (including not responding at all), 
 and a \emph{semi-cautious} one may ignore some of the queries but has to respond
 the others in the same way;
 \item for our protocol to work, for all cases we need to assume that $q<\beta$;
 additionally, we also need to assume that $q<1-2\beta$ for berserk adversary 
 and $q<2-\frac{1}{1-\beta}$ for semi-cautious adversary;
 \item $m_0\in\N$, the cooling-off period, $\ell\in\N$ is the number of consecutive rounds
 with the same opinion after which it becomes final;
 \item $\eta_m(j)\in\{1,\ldots,k\}$ is the number of $1$-opinions that $j$th node
 receives in $m$th round;
 \item $U[s,r]$ is the Uniform probability distribution of the interval $[s,r]$ 
 and $\mathcal{B}(m,p)$ is the Binomial distribution 
with parameters $m\geq 1$ and $p\in [0,1]$;
 \item $W(\cdot,\cdot,\cdot,\cdot,\cdot)$ is defined in~\eqref{df_bigW},
 $\psi_{cau}(\cdot,\cdot)$ and $\psi_{ber}(\cdot,\cdot)$ are defined 
 in~\eqref{df_psi_cautious}--\eqref{df_psi_berserk}, and 
 $\psi_{semi}(\cdot,\cdot)$ is defined in~\eqref{df_vr_theta_nk};
 \item $H_0\cap H_1$ is the ``consensus event'' (i.e., either all honest nodes 
 finalize on~$0$ or all honest nodes finalize on~$1$);
 \item $\mathcal{N}$ is the number of rounds until 
all honest nodes finalize their opinions;
 \item $\hp_0$ is the initial proportion of $1$-opinions
among the honest nodes;
 \item $\hp_m$ is the proportion of $1$-opinions
among the honest nodes after the $m$th round;
 \item $R_m\subset \{1,\ldots,n\}$ is the set of honest nodes that finalized
their opinions by round~$m$;
 \item $\Psi\in\N$ is the (random) moment when the adversary ``looses control''  
 (formally defined in~\eqref{def_Psi}).
\end{itemize}

\section{Conclusions and Future Work}
\label{s_final}
\noindent In this paper we described a consensus protocol
which is able to withstand a substantial proportion
of Byzantine nodes, and obtained some explicit
estimates on its safety and liveness.
A special feature of our protocol is that
it uses a sequence of random numbers (produced
by some external source or by the nodes themselves)
in order to have a ``randomly moving decision
threshold'' which quickly defeats the adversary's
attempts to mess with the consensus.
It is also worth noting that
 the ``quality'' of those random numbers is 
not critically important --
only the estimates on~$\Psi$ (Lemma~\ref{l_dominate_Psi})
will be affected in a non-drastic way.
In particular, one can permit that the random numbers
might be biased, or even that the adversary 
might get control of these numbers from time to time.
Also, it is clear from the proofs that there is no need
for the honest nodes to achieve consensus on the actual
values of these random numbers: if some (not very large) proportion
of honest nodes does not see the same number as the others,
this will not cause problems.
All this is due to the fact that, when the proportion
of $1$-opinions among the honest nodes becomes
``too small'' or ``too large'' (i.e., 
less than $\frac{\beta-q}{2(1-q)}$ or greater
than $1-\frac{\beta-q}{2(1-q)}$ in our proofs), the adversary
does not have any control anymore.


As mentioned above, this paper primarily contains a rigorous analysis of the simplest version of the protocol. It is of course tempting to 
consider other versions, in particular,
where the neighborhood relation between nodes is not that of a complete graph. 
However, as is frequently the case, introducing a nontrivial graph structure 
makes the problem hardly tractable in a purely analytic way, which means
that it has to be approached via simulations. For that approach,
we refer to the paper~\cite{capossele2019}, which can be considered as a 
complementary work to the present one (note that simulations also permit 
one to obtain better estimates the failure probabilities since, as noted above,
the estimates~\eqref{eq_safety_liveness_cautious} 
and~\eqref{eq_safety_liveness_berserk} of the present paper 
are usually not quite sharp).

Also, when considering the behavior of the adversary nodes, a number
of ``worst-case'' assumptions were made (notably, that all the adversarial nodes
are controlled by a single entity and that the adversary is omniscient).
Again, analysing a system with independent adversaries with possibly 
incomplete information is an interesting but analytically difficult program;
in this case the simulations approach seems to be more adequate as well.

We need to comment on anti-Sybil measure in practical 
implementations: indeed, it would be quite unfortunate
if the adversary is able to deploy an excessively 
large number of nodes, thus inflating the value of~$q$.
One of the possible approaches is using a 
variant of Proof-of-Stake;
with it, when querying, one needs to choose the node proportionally
to its weight (stake). This is partly the subject of an 
ongoing research effort~\cite{muller2020new}.


\bibliographystyle{IEEEtran}
\bibliography{main}

\section*{Acknowledgments}
\noindent The author thanks Hans Moog, Sebastian M\"uller,
Luigi Vigneri, and Wolfgang Welz for valuable comments and suggestions and also for providing some simulations of the model.


\appendix
\section{Runs of zeros and ones in Bernoulli trials}
Here we prove a simple fact about runs of zeros and ones 
in a sequence of Bernoulli trials, which will imply~\eqref{runs}.
Namely, let $\xi_1, \xi_2, \xi_3,\ldots$ be i.i.d.\
random variables with $\IP[\xi_i=1]=1-\IP[\xi_i=0]=h$,
and let, for $r,s\in \N$
\[
 \tau = \min\{m\geq r : 
  \xi_m=\xi_{m-1}=\ldots 
  = \xi_{m-r+1}=1\}
\]
and
\[
 \sigma = \min\{m\geq s : 
  \xi_m=\xi_{m-1}=\ldots 
  = \xi_{m-s+1}=0\}
\]
to be the first moments when we see runs of~$r$
ones (respectively, $s$ zeros). 
\begin{prop}
\label{p_elemfact}
It holds that
\begin{equation}
\label{runs_general}
 \IP[\tau<\sigma] = \frac{h^{r-1}(1-(1-h)^s)}
 {h^{r-1}+(1-h)^{s-1}-h^{r-1}(1-h)^{s-1}}.
\end{equation}
\end{prop}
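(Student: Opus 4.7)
The plan is to set up a first-step Markov-chain recursion, conditioning on the length of the current run, and solve the resulting linear system. Specifically, I would introduce two families of auxiliary probabilities: for $0 \le k \le r$, let $\alpha_k = \IP[\tau < \sigma \mid \text{current suffix is } k \text{ consecutive 1s preceded by a 0 (or the start)}]$, and for $0 \le j \le s$, let $\beta_j$ be the analogous quantity for $j$ consecutive 0s preceded by a 1. The boundary conditions are $\alpha_r = 1$ and $\beta_s = 0$, and by looking at the next Bernoulli trial one gets the one-step recurrences
\begin{align*}
 \alpha_k &= h\,\alpha_{k+1} + (1-h)\,\beta_1, \qquad 0 \le k < r,\\
 \beta_j &= h\,\alpha_1 + (1-h)\,\beta_{j+1}, \qquad 0 \le j < s.
\end{align*}
Note that $\alpha_0 = \beta_0$ (since from the neutral initial state the next symbol is 1 or 0 with probabilities $h$ and $1-h$), and this common value is the target probability $\IP[\tau < \sigma]$.

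Next I would iterate the recurrences from the boundaries. Telescoping the first recurrence from $k=1$ to $k=r$ using $\alpha_r = 1$ and summing the geometric series in $h$ gives $\alpha_1 = h^{r-1} + \beta_1(1 - h^{r-1})$. Telescoping the second from $j=1$ to $j=s$ using $\beta_s = 0$ and the geometric series in $1-h$ gives $\beta_1 = \alpha_1\bigl(1 - (1-h)^{s-1}\bigr)$. This is a linear $2 \times 2$ system for $(\alpha_1,\beta_1)$; substituting one into the other yields
\[
 \alpha_1 = \frac{h^{r-1}}{h^{r-1} + (1-h)^{s-1} - h^{r-1}(1-h)^{s-1}},
\]
and then $\beta_1$ by back-substitution.

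Finally, I would compute $\alpha_0 = h\alpha_1 + (1-h)\beta_1$; after factoring $h^{r-1}$ out of the numerator the bracketed expression simplifies via $h + (1-h) - (1-h)^{s} = 1 - (1-h)^s$, delivering exactly the right-hand side of~\eqref{runs_general}. There is no real obstacle here beyond bookkeeping: the only thing to watch is the algebraic simplification in the last step, where one must recognize $h + (1-h)\bigl(1-(1-h)^{s-1}\bigr) = 1 - (1-h)^s$ to get the claimed numerator. The specialization $r=s=\ell$ then immediately recovers~\eqref{runs} as used in the proof of Lemma~\ref{l_elem_calc}.
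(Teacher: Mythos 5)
Your proposal is correct and is essentially the paper's own argument: your $\alpha_1,\beta_1$ are exactly the paper's $p_1,p_0$, your telescoped recurrences reproduce the paper's two conditioning identities (the paper just sums over the length of the initial run directly rather than iterating a one-step recursion), and the final combination $\alpha_0=h\alpha_1+(1-h)\beta_1$ is the paper's $\IP[\tau<\sigma]=(1-h)p_0+hp_1$. The algebraic simplification $h+(1-h)\bigl(1-(1-h)^{s-1}\bigr)=1-(1-h)^s$ checks out, so the claimed formula follows.
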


\begin{proof}
To prove~\eqref{runs_general}, we use conditioning.
Abbreviate 
$p_0 = \IP[\tau<\sigma\mid \xi_1=0]$ and
$p_1 = \IP[\tau<\sigma\mid \xi_1=1]$.
Then, conditioning on the number of consecutive zeros
in the beginning, we write
\begin{equation}
\label{p0=Cp1}
 p_0 = \sum_{j=1}^{s-1}(1-h)^{j-1}h p_1 = (1-(1-h)^{s-1})p_1,
\end{equation}
and, conditioning on the number of consecutive ones
in the beginning, we obtain that
\begin{equation}
\label{p1=Cp0}
 p_1 = \sum_{j=1}^{r-1}h^{j-1}(1-h) p_0 
   + \sum_{j=r}^\infty h^{j-1}(1-h)\times 1
 = (1-h^{r-1})p_0 + h^{r-1}.
\end{equation} 
Solving  \eqref{p0=Cp1}--\eqref{p1=Cp0} for $p_{0,1}$
yields
\begin{align*}
p_0 &= \frac{h^{r-1}(1-(1-h)^{s-1})}
 {h^{r-1}+(1-h)^{s-1}-h^{r-1}(1-h)^{s-1}},\\
p_1 &= \frac{h^{r-1}}
 {h^{r-1}+(1-h)^{s-1}-h^{r-1}(1-h)^{s-1}},
\end{align*}
and we then obtain~\eqref{runs_general} by
using the obvious relation $\IP[\tau<\sigma]=(1-h)p_0+hp_1$.
\end{proof}

\end{document}